\documentclass[aps,pra,preprint,superscriptaddress,nofootinbib,showkeys]{revtex4-2}

\usepackage{amsmath,amssymb,amsthm}
\usepackage{physics}
\usepackage{graphicx}
\usepackage{xcolor}
\usepackage{hyperref}

\newtheorem{theorem}{Theorem}
\newtheorem{lemma}{Lemma}
\newtheorem{corollary}{Corollary}

\newtheorem{remark}{Remark}
\begin{document}
	
	\title{Exact Closed-Form Quantum Correlations of Maximally Entangled Qudit States under Arbitrary Non-Markovian Pure Dephasing}
	
	\author{Ahmad Akhound}
	\email{aakhound@pnu.ac.ir}
	\affiliation{Department of Physics, Payame Noor University, Tehran, Iran}
	
\begin{abstract}
	We study the time evolution of entanglement and quantum discord for a pair of maximally entangled qudits of arbitrary dimension $d$ under non-Markovian pure dephasing, using the exact solution of the independent boson model. Because this solution requires no Born--Markov, Lindblad, or rotating-wave approximation, every result reported here follows directly from the exact dynamics. Exploiting the Toeplitz structure of the resulting density matrix, we obtain a closed-form expression for the negativity as a finite sum, valid for arbitrary dimension and evolution time, together with a corresponding closed-form expression for the quantum discord. For the family of dephased maximally entangled states considered here, we prove that a computational-basis measurement is globally optimal over the entire POVM space, thereby eliminating the numerical optimization otherwise required for evaluating the quantum discord. We further show that both quantities saturate, as the dimension grows, to a dimension-independent limiting value, with a convergence rate of exactly $1/d$. For negativity, this scaling law is established through a rigorous theorem, including a Gaussian-type error bound of order $e^{-\alpha d^2}$, and is confirmed across six independent parameter regimes for two distinct classes of spectral density, Lorentzian and Ohmic; for discord, the same leading-order behavior is found numerically across the six Lorentzian regimes. The analytical results are validated by independent numerical approaches, including reconstruction from the underlying definitions, arbitrary-precision arithmetic, and a complementary pseudomode cross-check.
\end{abstract}
	\keywords{Quantum entanglement; Quantum discord; Qudit; Pure dephasing; Non-Markovian dynamics; Independent boson model; Negativity; Scaling law}
	\maketitle
	
	\section{Introduction}
	Quantum entanglement is among the most fundamental features of quantum mechanics and the principal resource behind much of quantum information processing. Protocols such as quantum key distribution, quantum teleportation, quantum computation, quantum error correction, quantum networking, and quantum sensing rely, to varying degrees, on the existence and persistence of entanglement. Understanding how entanglement behaves in the presence of an environment -- an interaction that no real system can avoid -- is therefore one of the central problems in quantum information theory.
	
	The bulk of prior work has focused on two-dimensional, qubit systems, largely because many information-theoretic quantities admit simple analytical relations in this setting and the associated numerical cost remains manageable. Recent progress in platforms based on photonic orbital angular momentum, trapped ions, Rydberg atoms, superconducting circuits, photonic circuits, and multilevel molecular spins has, however, increasingly turned attention toward multilevel, or qudit, systems. Despite this progress, no exact closed-form expression is presently available for the negativity or the quantum discord of a maximally entangled qudit pair under non-Markovian pure dephasing, valid for arbitrary dimension -- a gap detailed, with the relevant literature, in Sec.~\ref{sec:context-motivation}.
	
	Qudits offer several advantages over qubits, among them a higher information-carrying capacity per physical particle, improved robustness against certain types of noise, a reduced number of physical carriers required in some protocols, and enhanced security in a number of quantum communication schemes. Developing an exact theory of entanglement for systems of arbitrary dimension, however, remains highly challenging; many relations that are known in closed form for qubits either become intractable optimization problems in higher dimensions -- the evaluation of quantum discord for general qudit states, for instance, has been shown to be computationally hard in general -- or demand heavy numerical computation.
	
Among the various types of system-environment interaction, pure dephasing stands out as one of the most important, and at the same time most fundamental, mechanisms of coherence decay. In this process the populations of the system's energy levels remain unchanged, and only the off-diagonal elements of the density matrix are suppressed. This mechanism is recognized as the dominant source of decoherence across a wide range of physical platforms, from nuclear spins and quantum dots to multilevel molecular spin qudits. This has made the pure-dephasing model one of the standard settings for studying the exact dynamics of open quantum systems.	
	
In this paper we adopt the exact independent boson model to describe pure dephasing. Because the interaction Hamiltonian commutes with the system Hamiltonian, this model can be solved without invoking the Born approximation, the Markov approximation, a Lindblad master equation, the short-correlation-time approximation, the rotating-wave approximation, or any comparable simplification. Within this framework, the entire effect of the environment is captured by a single time-dependent dephasing function, computed exactly from the environmental spectral density and valid for any coupling strength, any temperature, and any spectral-density shape. Every relation derived in this paper therefore follows directly from the exact solution of the model, without introducing any uncontrolled approximation.	
	
Most existing studies of multilevel systems have concentrated on numerically computing entanglement quantities, examining a handful of fixed dimensions on a case-by-case basis, or resorting to upper and lower bounds. The primary aim of this paper, by contrast, is to derive exact analytical relations that hold for arbitrary dimension and that characterize the system's behavior without requiring any numerical optimization.	
	
To this end, we investigate a maximally entangled bipartite initial state in $\mathbb{C}^{d}\otimes\mathbb{C}^{d}$ undergoing non-Markovian pure dephasing. Leveraging the distinctive structure of the density matrix and the spectrum of its partial transpose, we derive an exact closed-form expression for the negativity, formulated as a finite sum applicable to arbitrary dimension and arbitrary dephasing functions. Subsequently, we analyze the asymptotic behavior of this expression in the large-dimension limit, establishing the precise convergence law, the leading-order term, and a rigorous bound on the remainder. Finally, we demonstrate that this convergence law is dictated solely by the dephasing factor, independent of the underlying spectral density -- a conclusion verified here for two representative spectral-density classes, Lorentzian and Ohmic.	
	
\subsection{Context and Motivation}
\label{sec:context-motivation}

Over the past two decades, the study of entanglement and discord in open quantum systems has evolved into a prominent branch of quantum information theory. A substantial body of literature, particularly concerning non-Markovian pure dephasing, has centered on qubit systems, notably investigations into entanglement sudden death and revival~\cite{Dajka2012} and the freezing of discord under local dephasing channels~\cite{Haikka2013}. With the maturation of qudit-based technologies, research interest has increasingly shifted toward higher-dimensional systems. Nevertheless, current studies -- including recent analyses of discord and negativity for qubit-qutrit systems under non-Markovian dephasing with colored noise~\cite{Abdellaoui2026} -- remain restricted to specific, low-dimensional pairings ($2\times2$ or $2\times3$), with the analysis continuing to rely on dimension-dependent numerical computation or approximate analytical bounds, thereby necessitating a unified treatment of arbitrary-dimensional systems.
	
Quantum discord for arbitrary-dimensional qudit systems remains largely an open challenge, primarily due to the inherent complexity of the optimization required to evaluate classical correlations. Even for specialized classes of states, such as generalized $X$-states, existing literature only provides semi-analytical approaches that necessitate numerical minimization over multiple parameters~\cite{Rau2018,RauQubitQudit2012}. A related simplification -- wherein the eigenvalue spectrum of the conditional state is independent of the measurement direction, so that no extremization over measurement parameters is required -- has been exploited for isotropic and Werner qudit states, though such studies typically assume a fixed mixing parameter rather than the time-dependent dynamics inherent to physical environments~\cite{Rau2018}.
	
Among the available frameworks, the pure-dephasing model provides one of the most suitable settings for deriving analytical results concerning the evolution of coherence and entanglement, owing to its amenability to an exact solution (Sec.~\ref{sec:model}). The exact solution for the independent boson model, established in seminal works on open quantum systems~\cite{BreuerPetruccione2002}, has since become a standard tool for investigating phenomena such as coherence decay, revival, and non-Markovian dynamics.	
	
Conversely, negativity occupies a pivotal position in quantum information theory as a widely utilized entanglement measure, primarily due to its direct link to the negative eigenvalues of the partial transpose~\cite{VidalWerner2002}. Most existing results, however, are restricted to low-dimensional bipartite systems, whereas for arbitrary-dimensional systems this quantity is typically evaluated numerically or for specific state instances. To the best of our knowledge, a closed-form expression for the negativity of a bipartite maximally entangled state under pure dephasing -- explicitly parameterized by both system dimension and the dephasing factor -- has yet to be reported.

The principal distinction of the present work lies not merely in evaluating specific quantities for a particular model, but in deriving a coherent set of analytical results that follow directly from the mathematical structure of the state and the exact solution of its dynamics. We first derive a closed-form expression for the negativity valid for arbitrary dimension; leveraging this same structure, we subsequently obtain a closed-form expression for quantum discord, thereby bypassing the numerical optimization typically required in the general treatment of discord for generic qudit--qudit systems~\cite{Rau2018,RauQubitQudit2012}. The global optimality of the computational-basis measurement underlying this result is established in Theorem~\ref{thm:discord-optimality}. We then examine the asymptotic behavior of both quantities in the large-dimension limit -- not a thermodynamic limit, but the limit $d\to\infty$ for a fixed bipartite pair -- establishing the precise convergence law, the leading-order term, and an explicit bound on the remainder. Finally, we demonstrate that the negativity convergence law depends solely on the dephasing factor $r(t)$ rather than on the specific spectral-density function that generates it, a conclusion further corroborated by independent numerical studies for both Lorentzian and Ohmic environments; the analogous leading-order behavior found for discord has been corroborated only within the Lorentzian regimes studied.
	
Recent studies have also presented closed-form expressions for the negativity of qudit states under standard noise channels, including phase, depolarizing, and amplitude damping~\cite{AliRostom2026}. In such models, however, the decay of pairwise correlations is independent of the separation $|j-k|$ -- characteristic of a uniform Markovian channel. This stands in contrast to the present model, where the decay follows $r(t)^{(n-m)^2}$, a feature that is intrinsically distance-dependent and emerges directly from the exact non-Markovian solution.

This paper is accordingly framed as an analytical treatment of quantum correlations for bipartite systems of arbitrary dimension, extending beyond the qubit case considered in most prior work, and providing a basis for analyzing a broader class of multilevel systems in quantum information theory.

\subsection{Organization of the Paper}

The remainder of this paper is structured as follows. Section~\ref{sec:model} introduces the physical model and the exact solution framework. Section~\ref{sec:negativity} is dedicated to the derivation of the closed-form expression for negativity. Section~\ref{sec:toeplitz} examines the Toeplitz structure of the state and its implications for quantum mutual information. Section~\ref{sec:discord} presents the closed-form expression for discord and the theorem establishing the global optimality of the computational basis. Section~\ref{sec:scaling-full} addresses the saturation scaling law and the rigorous proof of the convergence rate, with extensions to discord provided in Sec.~\ref{sec:discord-scaling}. Section~\ref{sec:unified-saturation} unifies these findings into a unified scaling law. The independence of this law from the choice of spectral density is verified in Sec.~\ref{sec:ohmic}, where the Ohmic spectral density is analyzed alongside the Lorentzian case. Finally, Sec.~\ref{sec:limitations} outlines the scope of this work, followed by a discussion and concluding remarks. Supplementary derivations and an independent robustness check utilizing the pseudomode method are presented in Appendices~\ref{app:algebra} and~\ref{app:pseudomode}, respectively.

\section{Physical Model and Problem Framework}
\label{sec:model}

We consider a bipartite system composed of two $d$-dimensional quantum subsystems -- hereafter referred to as qudits -- labeled $A$ and $B$. The initial state of this composite system is defined as the generalization of the Bell state to arbitrary dimension:
\begin{equation}
	\ket{\Phi_d} = \frac{1}{\sqrt{d}} \sum_{n=0}^{d-1} \ket{n,n}_{AB}
	\label{eq:initial-state}
\end{equation}
This maximally entangled state serves as the foundational initial condition for the dynamical analysis that follows.

Each qudit is independently coupled to a local bosonic reservoir via a pure-dephasing interaction, such that the interaction Hamiltonian commutes with the system Hamiltonian:
\begin{equation}
	[H_S, H_{\text{int}}] = 0.
\end{equation}
This commutation relation identifies the model as an instance of the independent boson model, which—distinct from standard approximations such as the Redfield or Lindblad master equations, or the time-convolutionless (TCL) expansion—admits an exact, closed-form solution valid for arbitrary coupling strengths and for spectral densities for which the decoherence integral is well defined.

The dephasing operator for each qudit is defined as
\begin{equation}
	S = \sum_{n=0}^{d-1} n \ket{n}\bra{n},
\end{equation}
implying that each level $n$ couples to the environment at a rate proportional to $n$. Consequently, the reduced density matrix elements of each qudit evolve according to the exact propagator of the independent boson model~\cite{BreuerPetruccione2002}:
\begin{equation}
	\bra{n}\rho_A(t)\ket{m} = \bra{n}\rho_A(0)\ket{m}\, e^{i(E_n-E_m)t}\, e^{-(n-m)^2\Gamma(t)},
	\label{eq:propagator}
\end{equation}
where $E_n$ is the energy of the $n$-th level and $\Gamma(t)$ is the decoherence function, determined by the environmental spectral density $J(\omega)$ as
\begin{equation}
	\Gamma(t)=\int_0^\infty d\omega\,\frac{J(\omega)}{\omega^2}\coth\!\left(\frac{\omega}{2k_BT}\right)\bigl(1-\cos\omega t\bigr).
	\label{eq:gamma-general}
\end{equation}
Notably, Eq.~\eqref{eq:propagator} requires no assumptions regarding weak coupling or short environmental correlation times. This renders the model particularly suitable for investigating non-Markovian dynamics, where memory effects within the environment play a central role in the evolution of the system.

Genuine non-Markovian dynamics, characterized by the potential for information backflow from the environment to the system, necessitates that the spectral density $J(\omega)$ possess structure rather than being featureless. A standard choice within the literature -- originating from the damped-cavity-mode model~\cite{Garraway1997} -- is the Lorentzian spectral density:
\begin{equation}
	J(\omega)=\eta\,\frac{\lambda^2}{(\omega-\omega_0)^2+\lambda^2},
	\label{eq:lorentzian}
\end{equation}
where $\omega_0$ is the central frequency of the environment, $\lambda$ denotes the bandwidth (the inverse of which defines the environmental memory timescale), and $\eta$ is the coupling strength. Often referred to as the leaky-cavity model in open-quantum-system physics, this spectral density represents the coupling of the system to a single, damped mode—such as a high-quality, but not ideal, optical cavity mode.

It is important to note that the Lorentzian spectral density in Eq.~\eqref{eq:lorentzian} approaches a nonzero value as $\omega\to0$ [$J(0)=\eta\lambda^2/(\omega_0^2+\lambda^2) \neq 0$]. At zero temperature, which is the regime considered throughout the Lorentzian analysis in this work, the decoherence integral defining $\Gamma(t)$ in Eq.~\eqref{eq:gamma-general} remains convergent, as the factor $(1-\cos\omega t)$ behaves as $\omega^2 t^2/2$ near $\omega=0$, thereby neutralizing the $1/\omega^2$ singularity. At finite temperature, however, the additional low-frequency behavior $\coth(\omega/2k_BT)\sim 2k_BT/\omega$ would render the integral infrared divergent for an unmodified Lorentzian spectrum with $J(0)\neq0$. The Lorentzian results presented below must therefore be understood as zero-temperature results; a finite-temperature extension would require an infrared-regular spectral density or an explicit low-frequency cutoff.

Given that each qudit undergoes independent dephasing through uncorrelated reservoirs, the coherence element of the joint state is the product of two independent decoherence factors:
\begin{equation}
	e^{-(n-m)^2\Gamma_A(t)} \cdot e^{-(n-m)^2\Gamma_B(t)} = e^{-2(n-m)^2\Gamma(t)},
\end{equation}
where we have set $\Gamma_A(t) = \Gamma_B(t) \equiv \Gamma(t)$, since both qudits couple to reservoirs of the same type with identical physical parameters. This factor of $2$ is notable: while many studies considering a single shared reservoir adopt the convention $r(t) = e^{-\Gamma(t)}$, the presence of independent local reservoirs here necessitates a distinct convention. We therefore define the effective coherence function as
\begin{equation}
	r(t) \equiv e^{-2\Gamma(t)}.
	\label{eq:r-def}
\end{equation}
Without loss of generality, and accounting for the phase terms discussed below, the elements of the bipartite density matrix are then given by
\begin{equation}
	\rho_{nn,mm}(t) = \frac{1}{d} \, r(t)^{(n-m)^2} \, e^{i\phi_{nm}(t)}, \quad \rho_{nn,nn}(t) = \frac{1}{d},
	\label{eq:rho-elements}
\end{equation}
where $\phi_{nm}(t)$ denotes the accumulated phase. In the basis $\{\ket{n,n}\}_{n=0}^{d-1}$, the density matrix elements depend solely on the level separation $|n-m|$; this feature manifests as a Toeplitz structure, as formalized in Sec.~\ref{sec:toeplitz}, and provides the mathematical structure underlying the closed-form expressions derived in the subsequent sections.

	Before proceeding, a clarification regarding the phase terms $\phi_{nm}(t)$ in Eq.~\eqref{eq:rho-elements} is in order. Every quantity derived in this work—including negativity, quantum discord, and the von Neumann entropy—depends exclusively on the spectrum (the eigenvalues) of the density matrix, rather than its individual matrix elements. The phases $\phi_{nm}(t)$ are not arbitrary. For local system Hamiltonians diagonal in the dephasing basis, they take the difference form
	\begin{equation}
		\phi_{nm}(t) = \Theta_n(t)-\Theta_m(t), \qquad \Theta_n(t) \equiv \int_0^t \bigl[E_n^A(s)+E_n^B(s)\bigr]\,ds,
	\end{equation}
	which, for identical time-independent qudits, reduces to $\phi_{nm}(t)=2(E_n-E_m)t$. Phases of precisely this difference form -- and only phases of this form -- can be eliminated by a local unitary transformation
	$U_A=\mathrm{diag}(e^{i\Theta_0(t)},\dots,e^{i\Theta_{d-1}(t)})$
	applied to subsystem $A$, since it removes the corresponding phase factors exactly.
	Because such a transformation leaves the spectrum of the bipartite density matrix invariant, we set $\phi_{nm}(t)=0$ throughout this paper without loss of generality.
	Consequently, only the magnitude $r(t)$ is relevant, which, given that $\Gamma(t)\ge0$, is a real-valued quantity confined to the interval $[0,1]$.
	
	\begin{remark}[Robustness against an additional diagonal Hamiltonian term]
		The addition of any diagonal term to the system Hamiltonian -- such as the Zeeman splitting induced by an external magnetic field -- leaves the operator $S$ and the commutation relation $[H_S,H_{\text{int}}]=0$ unaffected, and merely shifts the energies $E_n$ appearing in the phase term $\phi_{nm}(t) = 2(E_n - E_m)t$. As all results established herein depend solely on the dynamics through $|r(t)|$, all results derived in this paper remain unchanged under the addition of such diagonal Hamiltonian terms, including time-dependent Zeeman fields.
	\end{remark}
	
\section{Closed-Form Negativity}
\label{sec:negativity}

Negativity is one of the most widely used entanglement measures as it can be evaluated for arbitrary dimension $d$. It is defined in terms of the partial transpose as
\begin{equation}
	\mathcal{N}(\rho) \equiv \sum_{i:\ \mu_i<0} |\mu_i|,
\end{equation}
where $\mu_i$ denote the eigenvalues of $\rho^{T_B}$, the partial transpose of $\rho$ with respect to subsystem $B$.

To compute $\rho^{T_B}$ for our state, we apply the partial transpose to Eq.~\eqref{eq:rho-elements}. The evolved state belongs to the well-known class of maximally correlated states, $\rho_{\mathrm{MC}}=\sum_{m,n}a_{mn}\ket{m,m}\bra{n,n}$, whose partial transpose decomposes into one-dimensional diagonal sectors and two-dimensional sectors associated with each pair $m<n$~\cite{Rains1999,Zhu2018}. Here we specialize this general structure to the exact dephasing amplitudes $a_{mn}(t)=d^{-1}r(t)^{(n-m)^2}e^{i\phi_{nm}(t)}$. The partial transpose with respect to $B$ maps each term $\rho_{nn,mm} \ket{n,n}\bra{m,m}$ to $\rho_{nn,mm} \ket{n,m}\bra{m,n}$. A crucial observation is that each such term couples only the basis vectors $\{\ket{n,m}, \ket{m,n}\}$ and does not mix with any other component. Consequently, the $d^2$-dimensional Hilbert space of the qudit pair decomposes into the following independent subspaces:
\begin{itemize}
	\item For each $n$, a one-dimensional subspace spanned by $\ket{n,n}$, with eigenvalue $\rho_{nn,nn} = 1/d$, which is always positive.
	\item For each pair $m < n$, an independent two-dimensional subspace spanned by $\{\ket{n,m}, \ket{m,n}\}$, with the block matrix
	\begin{equation}
		\frac{1}{d}
		\begin{pmatrix}
			0 & r(t)^{(n-m)^2}e^{i\phi_{nm}(t)} \\
			r(t)^{(n-m)^2}e^{-i\phi_{nm}(t)} & 0
		\end{pmatrix}.
		\label{eq:2x2-block}
	\end{equation}
\end{itemize}

The eigenvalues of each $2\times2$ block in Eq.~\eqref{eq:2x2-block} are $\pm\frac{1}{d}\left|r(t)^{(n-m)^2}e^{i\phi_{nm}(t)}\right| = \pm\frac{1}{d}r(t)^{(n-m)^2}$, as these blocks are purely off-diagonal Hermitian matrices with zero trace and a determinant equal to the negative squared modulus of their off-diagonal entry. Consequently, the full spectrum of $\rho^{T_B}$ is the union of the spectra of these independent subspaces. This approach bypasses the need for the numerical diagonalization of a $d^2\times d^2$ matrix; it suffices to diagonalize the $\binom{d}{2}$ size-$2\times2$ blocks individually.

Each $2\times2$ block in Eq.~\eqref{eq:2x2-block} possesses exactly one negative eigenvalue, $-\frac{1}{d}r(t)^{(n-m)^2}$, independent of the phase $\phi_{nm}(t)$, given that $r(t) \in [0,1]$. By definition, the negativity is the sum of the absolute values of all negative eigenvalues of $\rho^{T_B}$; summing this quantity over all $\binom{d}{2}$ pairs $m<n$, we obtain:
\begin{equation}
	\mathcal{N}(t) = \sum_{0\le m<n\le d-1} \frac{1}{d}\, r(t)^{(n-m)^2}.
\end{equation}
To simplify this expression, we introduce the variable $k \equiv n-m$, representing the level separation, which ranges from $k=1$ to $d-1$. For a fixed $k$, the number of pairs $(m,n)$ satisfying $n-m=k$ is $d-k$, as $m$ ranges from $0$ to $d-1-k$. This yields
\begin{equation}
	\mathcal{N}(t) = \frac{1}{d}\sum_{k=1}^{d-1} (d-k)\, r(t)^{k^2}.
	\label{eq:negativity-final}
\end{equation}
This closed-form expression for the negativity is an exact finite sum—neither an approximation nor a numerical fit—valid for any dimension $d\ge2$ and $r(t)\in[0,1]$.

Prior to utilizing Eq.~\eqref{eq:negativity-final} in the subsequent sections, we verify its validity against three limiting cases:
\begin{align}
	d=2 &: \quad \mathcal{N}(t) = \tfrac{1}{2}\, r(t)
	\label{eq:check-d2}\\
	r=1 &: \quad \mathcal{N} = \tfrac{d-1}{2}
	\label{eq:check-r1}\\
	r=0 &: \quad \mathcal{N} = 0.
	\label{eq:check-r0}
\end{align}
Equation~\eqref{eq:check-d2} recovers the established negativity for a qubit Bell state under pure dephasing~\cite{Dajka2012}. Equation~\eqref{eq:check-r1}, corresponding to the absence of dephasing ($r(t)=1$), reproduces the exact negativity $\mathcal{N}(\Phi_d)=(d-1)/2$ of a maximally entangled $d\times d$ state~\cite{VidalWerner2002}. Equation~\eqref{eq:check-r0}, representing complete dephasing ($r(t)=0$), correctly yields a vanishing negativity, as the density matrix reduces to a fully separable classical mixture in this limit.

Beyond these analytical checks, Eq.~\eqref{eq:negativity-final} has also been verified through a direct matrix-level numerical computation of the partial transpose. Specifically, the full $d^2\times d^2$ density matrix was constructed explicitly, the partial transpose was evaluated, and its eigenvalue spectrum was obtained numerically for $d=2,3,4$ at multiple time points $t$. This validation was performed at a coupling strength of $\eta=1.0$; although the numerical regimes employed later in this paper typically use $\eta=0.15$, this difference is immaterial because Eq.~\eqref{eq:negativity-final} depends exclusively on the dephasing factor $r(t)$ rather than explicitly on the environmental parameters, which enter only through $r(t)$ (see Code S1).

As a separate matrix-level verification, the density matrix was constructed directly from the independent-boson dephasing factor, without using the partial-transpose block decomposition employed in the analytical derivation. This test was performed in a distinct parameter regime ($\eta=0.2,\lambda=0.15$) and for dimensions up to $d=7$. Hermiticity and unit trace were checked explicitly in every case. Across 30 evaluations, the maximum deviation from Eq.~\eqref{eq:negativity-final} was $8.9\times10^{-16}$ (see Code S2).
\section{Toeplitz Structure of the State and Quantum Mutual Information}
\label{sec:toeplitz}

Before proceeding to the discord, we introduce the specific mathematical structure of the qudit-pair density matrix underlying both quantities. Given that Eq.~\eqref{eq:rho-elements} implies the full density matrix $\rho_{AB}(t)$ has support only on the $d$-dimensional subspace spanned by $\{\ket{n,n}\}_{n=0}^{d-1}$—out of the total $d^2$-dimensional Hilbert space—it is spectrally equivalent to a $d \times d$ matrix, denoted by $M(t)$:
\begin{equation}
	M_{nm}(t) = \frac{1}{d}\, r(t)^{(n-m)^2}\, e^{i\phi_{nm}(t)}.
	\label{eq:M-matrix}
\end{equation}
Setting $\phi_{nm} = 0$, as established previously, the entry $M_{nm}$ depends solely on the separation $n-m$, rather than the absolute indices $n$ and $m$. In linear algebra, such a matrix is classified as a Toeplitz matrix—a standard, general algebraic structure. The significance of this observation follows from the isometry $V:\mathbb{C}^{d}\to\mathbb{C}^{d}\otimes\mathbb{C}^{d}$ defined by $V\ket{n}=\ket{n,n}$, for which
\begin{equation}
	\rho_{AB}(t)=V M(t)V^\dagger.
\end{equation}
Hence the nonzero eigenvalues of $\rho_{AB}(t)$ coincide exactly with the eigenvalues of $M(t)$, while the remaining $d^2-d$ eigenvalues of $\rho_{AB}(t)$ vanish. Therefore, the von Neumann entropy of the full state is identical to that of the $d \times d$ matrix $M(t)$:
\begin{equation}
	S(\rho_{AB}(t)) = S(M(t)) = -\sum_{i=1}^{d} \mu_i \ln \mu_i,
\end{equation}
where $\mu_i$ are the eigenvalues of $M(t)$. This offers a substantial computational reduction, as it requires the diagonalization of only a $d \times d$ matrix rather than a $d^2 \times d^2$ one.

Before computing $S(M(t))$, we examine the reduced states of the two qudits. Tracing Eq.~\eqref{eq:rho-elements} over subsystem $B$ (or symmetrically over $A$) yields only the diagonal terms $n=m$, as the off-diagonal terms vanish under the partial trace:
\begin{equation}
	\rho_A(t)=\rho_B(t)=\frac{1}{d}I_d \quad \forall t.
	\label{eq:reduced-states}
\end{equation}
At every time $t$, both reduced states are identical to the $d$-dimensional maximally mixed state; this is a direct consequence of the symmetry of the generalized Bell initial state and the pure-dephasing nature of the noise, which leaves the populations $\rho_{nn,nn} = 1/d$ invariant. Their entropy is therefore constant:
\begin{equation}
	S(\rho_A) = S(\rho_B) = \ln d \quad \forall t.
\end{equation}
This time-independence of the reduced-state entropy serves as a cornerstone for our derivation of the closed-form discord. Combining the entropy of the reduced states with the entropy of the full state, $S(M(t))$, the quantum mutual information $I(A:B)$, defined as $I(A:B) = S(\rho_A) + S(\rho_B) - S(\rho_{AB})$, admits the following exact reduced expression:
\begin{equation}
	I(A:B)(t) = 2\ln d - S\bigl(M(t)\bigr).
	\label{eq:mutual-info}
\end{equation}
This expression is exact and reduces the evaluation of $I(A:B)(t)$ to the spectrum of the $d\times d$ Toeplitz matrix $M(t)$.

We now examine the spectrum of $M(t)$ for varying dimensions.

For $d=2$, $M(t)$ is a $2 \times 2$ matrix, and its eigenvalues are determined by solving the elementary quadratic characteristic equation:
\begin{equation}
	\mu_{\pm} = \frac{1\pm r(t)}{2} \implies S(M) = -\mu_+\ln\mu_+-\mu_-\ln\mu_-.
	\label{eq:d2-spectrum}
\end{equation}
This yields a complete, closed-form analytic expression.

For $d=3$ and $d=4$, the characteristic equation is respectively cubic and quartic. Although closed radical expressions exist in principle, they are cumbersome and offer little practical insight within the scope of this work.

For general $d\ge5$, the Abel--Ruffini theorem precludes a radical solution for a generic polynomial of degree $d$. This theorem alone does not prove that the characteristic polynomial of $M(t)$ is generically unsolvable by radicals, and no such Galois-theoretic claim is required here. Since no dimension-independent analytical diagonalization is presently available for the Toeplitz matrix in Eq.~\eqref{eq:M-matrix}, its spectrum is evaluated numerically for general $d$. This is a limitation of the available algebraic representation, not a physical approximation.

\section{Discord: A Closed-Form Expression without Numerical Optimization}
\label{sec:discord}

Quantum discord is defined as
\begin{equation}
	D(A\!:\!B) = I(A\!:\!B) - \max_{\{E_k^B\}} \Big[S(\rho_A) - \sum_k p_k\, S(\rho_{A|k})\Big],
	\label{eq:discord-def}
\end{equation}
where $\{E_k^B\}$ is an arbitrary POVM on subsystem $B$, $p_k=\Tr[(I_A\otimes E_k^B)\rho]$ denotes the probability of outcome $k$, and $\rho_{A|k}=\Tr_B[\rho(I_A\otimes E_k^B)]/p_k$ represents the conditional state of $A$ given outcome $k$ on $B$.

In the general case, this optimization constitutes the most demanding computational aspect of evaluating discord, often necessitating numerical approaches for most state families. For the class of $X$-states, Ali, Rau, and Alber proposed a closed-form expression~\cite{AliRauAlber2010,AliRauAlber2010E}; however, it was subsequently demonstrated that this formula is not universally valid and can significantly deviate from the true discord value~\cite{Huang2013}. The result presented herein does not rely on such assumptions; instead, the global optimality of the computational basis is proven directly from first principles for the entire family of states considered, valid for arbitrary dimension $d$.

\subsection{Global Optimality of the Computational Basis}

\begin{theorem}
	\label{thm:discord-optimality}
	Let $\rho_{\mathrm{MC}} = \sum_{m,n=0}^{d-1} a_{mn}\ket{m,m}\bra{n,n}$ be a maximally correlated state with uniform diagonal $a_{nn}=1/d$. Then, for every POVM $\{E_k\}$ on subsystem $B$, with $p_k = \Tr[(I_A\otimes E_k)\rho_{\mathrm{MC}}]$ and $\rho_{A|k} = \Tr_B[\rho_{\mathrm{MC}}(I_A\otimes E_k)]/p_k$,
	\begin{equation}
		S(\rho_A) - \sum_k p_k\, S(\rho_{A|k}) \;\le\; S(\rho_A) = \ln d,
	\end{equation}
	with equality attained by the projective measurement in the computational basis $\{\ket{k}\}$ on $B$ ($E_k=\ket{k}\bra{k}$). The computational basis therefore globally maximizes the one-way classical correlation $C(A\!:\!B)$ over the entire space of POVMs on $B$, and $C(A\!:\!B)=\ln d$, $D(A\!:\!B)=\ln d - S(\rho_{\mathrm{MC}})$.
\end{theorem}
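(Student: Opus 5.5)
The plan is to prove the upper bound using only the nonnegativity of von Neumann entropy, and then to show it is attained by computing the conditional states for the computational-basis measurement directly. Nothing specific to POVMs is needed for the bound. Every $\rho_{A|k}$ is a valid density operator, so $S(\rho_{A|k})\ge 0$, and the $p_k$ are nonnegative. Hence
\[
S(\rho_A)-\sum_k p_k S(\rho_{A|k})\le S(\rho_A).
\]
Outcomes with $p_k=0$ are dropped from the sum. It remains to identify $S(\rho_A)$. Tracing $\rho_{\mathrm{MC}}$ over $B$ removes every term $\ket{m,m}\bra{n,n}$ with $m\neq n$, because $\Tr(\ket{m}\bra{n})=\delta_{mn}$. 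This leaves $\rho_A=\sum_n a_{nn}\ket{n}\bra{n}=I_d/d$. The argument is the same one that gave Eq.~\eqref{eq:reduced-states}, now applied to a general $a_{mn}$ with uniform diagonal. Thus $S(\rho_A)=\ln d$, and the bound holds uniformly over all POVMs on $B$.

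For attainment, I take $E_k=\ket{k}\bra{k}$ on $B$. Then
\[
\rho_{\mathrm{MC}}(I_A\otimes\ket{k}\bra{k})=\sum_m a_{mk}\ket{m,m}\bra{k,k},
\]
and taking $\Tr_B$ keeps only the $m=k$ term, since $\bra{k}m\rangle=\delta_{mk}$. This gives $\Tr_B[\cdots]=a_{kk}\ket{k}\bra{k}$. So $p_k=a_{kk}=1/d$ and $\rho_{A|k}=\ket{k}\bra{k}$ is pure, with $S(\rho_{A|k})=0$. The expression therefore equals $\ln d$, the bound is saturated, and $C(A\!:\!B)=\ln d$ is the maximum over the full POVM set.

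The discord formula then follows from Eq.~\eqref{eq:discord-def}. Since $S(\rho_B)=\ln d$ by the same partial-trace argument applied over $A$, we have
\[
I(A\!:\!B)=2\ln d-S(\rho_{\mathrm{MC}}),
\]
as in Eq.~\eqref{eq:mutual-info}, and hence $D=\ln d-S(\rho_{\mathrm{MC}})$. No step here is genuinely hard. The one point needing care is that the trivial bound $S(\rho_{A|k})\ge0$ is tight, and this depends entirely on the maximally correlated support. Measuring $B$ in the computational basis projects onto a single $\ket{k,k}$ and leaves $A$ in a pure state, and this happens regardless of the coherences $a_{mn}$. I would also check that Hermiticity and positivity of $\rho_{\mathrm{MC}}$ are assumed, so that the conditional states are legitimate density matrices.
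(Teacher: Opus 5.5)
Your proof is correct and follows the paper's argument essentially step for step. You get the universal bound from $S(\rho_{A|k})\ge 0$, and you get saturation because the computational-basis measurement collapses $A$ onto the pure state $\ket{k}\bra{k}$ with $p_k=1/d$. The one point you assert rather than prove is that $\Tr_B[\rho_{\mathrm{MC}}(I_A\otimes E_k)]$ is positive semidefinite for an arbitrary effect $E_k$. The paper justifies this by rewriting it as $\Tr_B[(I_A\otimes E_k^{1/2})\rho_{\mathrm{MC}}(I_A\otimes E_k^{1/2})]$, which is a one-line fix rather than a gap.
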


\begin{proof}
	For any POVM $\{E_k\}$ and any instrument $\{M_{k,j}\}$ realizing it (i.e., $\sum_j M_{k,j}^\dagger M_{k,j}=E_k$), the cyclic property of the partial trace over subsystem $B$ gives
	\begin{equation}
		\sum_j\Tr_B\bigl[(I_A\otimes M_{k,j})\rho_{\mathrm{MC}}(I_A\otimes M_{k,j}^\dagger)\bigr] = \Tr_B\bigl[\rho_{\mathrm{MC}}(I_A\otimes E_k)\bigr],
	\end{equation}
	independent of the particular decomposition $\{M_{k,j}\}$. In particular, choosing the Lüders realization $M_k=E_k^{1/2}$ shows that this operator equals $\Tr_B[(I_A\otimes E_k^{1/2})\rho_{\mathrm{MC}}(I_A\otimes E_k^{1/2})]$, which is manifestly of the form $X\rho_{\mathrm{MC}}X^\dagger$ and hence positive semi-definite, with trace $p_k=\Tr[(I_A\otimes E_k)\rho_{\mathrm{MC}}]$. Thus $\rho_{A|k}$ is, for every POVM $\{E_k\}$, a valid quantum state depending only on the effect $E_k$, independent of any instrument realizing it. By the non-negativity of the von Neumann entropy for a valid quantum state, $S(\rho_{A|k})\ge0$ for every outcome $k$, giving the universal bound
	\begin{equation}
		S(\rho_A) - \sum_k p_k S(\rho_{A|k}) \le S(\rho_A) = \ln d,
	\end{equation}
	valid for every POVM $\{E_k\}$ on $B$.
	
	It remains to show this bound is attained. Consider the projective measurement in the computational basis $\{\ket{k}\}_{k=0}^{d-1}$ on $B$ ($E_k=\ket{k}\bra{k}$). Since $\rho_{\mathrm{MC}}$ has uniform diagonal $a_{nn}=1/d$ and support confined to $\{\ket{n,n}\}$, outcome $k$ collapses the conditional state of $A$ to the pure state
	\begin{equation}
		\rho_{A|k} = \ket{k}\bra{k}, \qquad S(\rho_{A|k})=0 \ \ \forall k.
	\end{equation}
	Consequently $\sum_k p_k S(\rho_{A|k})=0$, and the bound is saturated exactly. Since no POVM can exceed the universal bound $S(\rho_A)$, and the computational basis attains it exactly, it is a global maximizer of the one-way classical correlation over the entire space of POVMs on $B$. \qedhere
\end{proof}

\begin{corollary}
	The dephased Bell-state family of Eq.~\eqref{eq:rho-elements} is a special case of Theorem~\ref{thm:discord-optimality}, with $a_{mn}(t)=d^{-1}r(t)^{(n-m)^2}e^{i\phi_{nm}(t)}$. Hence, for every dimension $d$ and every time $t$, the computational-basis measurement globally maximizes the one-way classical correlation, giving $C(A\!:\!B)=\ln d$ and, via Eq.~\eqref{eq:discord-def}, the closed form $D(t)=\ln d - S(M(t))$.
\end{corollary}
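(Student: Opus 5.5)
The statement to prove is the Corollary: the dephased Bell family is a special case of Theorem~\ref{thm:discord-optimality}, so that $D(t)=\ln d - S(M(t))$. The plan has three steps: verify the hypotheses of Theorem~\ref{thm:discord-optimality}, read off $C(A\!:\!B)=\ln d$, and combine this with the mutual information of Eq.~\eqref{eq:mutual-info}.

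\textbf{Step 1: the hypotheses hold.} By Eq.~\eqref{eq:rho-elements}, the state can be written as $\rho_{AB}(t)=\sum_{m,n}a_{mn}(t)\ket{m,m}\bra{n,n}$ with $a_{mn}(t)=d^{-1}r(t)^{(n-m)^2}e^{i\phi_{nm}(t)}$. The diagonal entries are $a_{nn}=d^{-1}r^0=1/d$, so the diagonal is uniform. The state is a genuine density matrix because it arises from the exact dephasing evolution of $\ket{\Phi_d}$. Equivalently, $\rho_{AB}=VMV^\dagger$ with $M\ge0$, since $M$ is the Hadamard product of the rank-one PSD matrix $\tfrac1d J$ with the PSD dephasing-factor matrices. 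Hence Theorem~\ref{thm:discord-optimality} applies for every $d$ and every $t$.

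\textbf{Step 2: the classical correlation.} The theorem then gives $C(A\!:\!B)=\max_{\{E_k\}}\bigl[S(\rho_A)-\sum_k p_kS(\rho_{A|k})\bigr]=\ln d$. The maximum is attained by the computational-basis projective measurement.

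\textbf{Step 3: the discord.} Substitute this into Eq.~\eqref{eq:discord-def} together with Eq.~\eqref{eq:mutual-info}:
\begin{equation}
D(t)=I(A\!:\!B)(t)-C(A\!:\!B)=\bigl[2\ln d-S(M(t))\bigr]-\ln d=\ln d-S(M(t)).
\end{equation}
Here $S(\rho_{AB}(t))=S(M(t))$ follows from the isometry $V$ of Sec.~\ref{sec:toeplitz}. This step needs care only if the phases are retained. With nonzero $\phi_{nm}$, $M$ is unitarily equivalent via a diagonal phase unitary to its $\phi=0$ version, so the spectrum is unchanged and the formula holds either way.

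\textbf{Main obstacle.} Essentially everything reduces to confirming that the hypotheses of the theorem are met. The only nontrivial point is positivity of the state, and the Schur-product argument in Step 1 settles it.
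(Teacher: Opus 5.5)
Your proposal is correct and follows the paper's own route. You identify the dephased family as a maximally correlated state with uniform diagonal $a_{nn}=1/d$, invoke Theorem~\ref{thm:discord-optimality} to get $C(A\!:\!B)=\ln d$, and subtract this from $I(A\!:\!B)(t)=2\ln d-S(M(t))$ of Eq.~\eqref{eq:mutual-info}; your positivity check (the state is the image of $\ket{\Phi_d}$ under the physical dephasing evolution) and your diagonal-phase-unitary remark are sound additions the paper leaves implicit, having already removed the difference-form phases by a local unitary in Sec.~\ref{sec:model}.
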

\begin{remark}[POVM realization independence]
	For the one-way classical correlation considered here, the conditional state of subsystem $A$ depends only on the POVM effect $E_k$, not on the particular instrument realizing that effect: as shown in the proof, $\sum_j\Tr_B[(I_A\otimes M_{k,j})\rho(I_A\otimes M_{k,j}^\dagger)] = \Tr_B[\rho(I_A\otimes E_k)]$ for every Kraus decomposition $\{M_{k,j}\}$ of $E_k$. Hence the optimization in Eq.~\eqref{eq:discord-def} may be formulated directly over POVMs $\{E_k\}$, with no additional optimization over Kraus realizations. Naimark's dilation theorem ensures that every POVM can be represented as a projective measurement on an enlarged Hilbert space (see, e.g., Ref.~\cite{NielsenChuang2010}, Sec.~2.2.6).
\end{remark}
The theorem is stated for the general family of maximally correlated states; it makes no claim regarding arbitrary qudit states and should not be interpreted as a general result for all qudit states.

By the Corollary above, the discord of the dephased Bell-state family admits the complete closed form
\begin{equation}
	\boxed{\;D(t) = I(A\!:\!B)(t) - \ln d = \ln d - S\bigl(M(t)\bigr)\;}
	\label{eq:discord-closed}
\end{equation}
which holds for any dimension $d$ and any time $t$, requiring no hidden assumptions or numerical optimization. Correspondingly, the classical correlation $C(t) \equiv I(A\!:\!B)(t) - D(t) = \ln d$ is exactly time-independent.

\subsection{Relation to Prior Work}

The broader strategy of bypassing numerical extremization over measurement parameters, realized here through Theorem~\ref{thm:discord-optimality}, was previously pursued by Rau~\cite{Rau2018} for the family of isotropic and Werner qudit states, where a uniform mixing parameter $p$ plays the central role; in that case the underlying mechanism differs from the one used here, since the eigenvalue spectrum of the conditional state is shown to be independent of the measurement direction, rather than collapsing to a pure state under a specific basis, so that every measurement yields the same value and no supremization is needed. The present work differs from that study in three respects. First, the mechanism underlying our result is a genuine global-optimality proof over the entire space of generalized measurements, established via the non-negativity of the conditional-state entropy and its saturation by a specific, computational-basis measurement, rather than a measurement-independence property of the state itself. Second, our density-matrix structure differs fundamentally from the uniform structure of isotropic states, exhibiting a Toeplitz decay of the form $r(t)^{(n-m)^2}$ that emerges from the exact pure-dephasing dynamics considered in this work. Third, and most importantly, our $r(t)$ originates from the exact physical dynamics of non-Markovian pure dephasing rather than an arbitrary mixing parameter; it is this physical origin that gives rise to the $1/d$ scaling law (Sec.~\ref{sec:scaling-full}), a result not examined in~\cite{Rau2018}.

A related result—avoiding numerical optimization for a broad class of states—was previously reported for asymmetric qubit-qudit systems, where only one subsystem is a qubit~\cite{RauQubitQudit2012}. The present work independently establishes this approach for the symmetric qudit-qudit family, where both subsystems possess arbitrary dimension $d$.

Finally, Abdellaoui et al.~\cite{Abdellaoui2026} recently examined discord (based on linear entropy) and logarithmic negativity for a fixed qubit-qutrit system under non-Markovian dephasing with colored noise. Our work differs from that study in three respects: we consider arbitrary dimension $d$ (rather than a fixed $2\times3$ pairing), calculate the exact von Neumann discord with a proven closed-form expression (rather than a linear-entropy approximation), and derive the analytical $1/d$ scaling law, which was not explored in that study.

\subsection{Extensive Independent Numerical Verification}

The proof of Theorem~\ref{thm:discord-optimality} is a self-contained mathematical result and is sufficient, in itself, to establish Eq.~\eqref{eq:discord-closed}. Nevertheless, for additional robustness, this result was further validated via an independent method: a broad numerical search over projective measurements, parameterized by unitary rotations of the computational basis and optimized using differential evolution (see Code S5 in the Supplementary Material). This test was conducted for dimensions $d=2, 3, 4, 5$ at ten independent time points spanning a full period of non-Markovian oscillation, using two independent random seeds per dimension--time pair to test the sensitivity of the numerical optimizer to distinct initial populations. The results are summarized in Table~\ref{tab:discord-de}.

\begin{table}[h]
	\centering
	\begin{tabular}{ccc}
		\hline
		Dimension $d$ & No. of checks & Max. $|D_{\text{DE}}-D_{\text{closed form}}|$ \\
		\hline
		$2$ & $10$ & $6.16\times10^{-13}$ \\
		$3$ & $10$ & $4.94\times10^{-9}$ \\
		$4$ & $10$ & $3.14\times10^{-9}$ \\
		$5$ & $10$ & $2.13\times10^{-9}$ \\
		\hline
		\multicolumn{2}{l}{Overall maximum (40 dimension--time pairs; 80 optimization runs)} & $4.94\times10^{-9}$ \\
		\hline
	\end{tabular}
	\caption{Comparison of the discord obtained from a global numerical search with the closed form of Eq.~\eqref{eq:discord-closed}, across dimensions.}
	\label{tab:discord-de}
\end{table}

Across the 40 dimension--time pairs spanning $d=2, 3, 4, 5$, corresponding to 80 optimization runs with two independent seeds per pair, the two seeds converged to essentially identical values, with differences below $10^{-9}$; furthermore, both agreed with the closed-form expression to at least eight decimal places—a degree of precision exceeding that required for a confirmatory independent check. To probe this robustness against the choice of random seed more stringently, a complementary, more demanding test was performed for the $d=2$ case at the fixed time $t=10$, using one hundred independent random seeds (see Code~S7). In all one hundred runs, the algorithm converged consistently to the discord value predicted by the closed form, with a maximum absolute deviation of
\begin{equation}
	\max_{\text{seed}\in\{1,\dots,100\}} |D_{\text{DE}}-D_{\text{closed form}}| = 1.985\times10^{-10},
\end{equation}
which remains well within the limits of numerical precision. Collectively, these two tests show no indication of convergence to distinct local optima, whether across independent seeds at varying dimension--time pairs or across one hundred independent seeds at a fixed dimension and time. These tests do not substitute for a formal mathematical proof; rather, they serve as robust numerical corroboration of Theorem~\ref{thm:discord-optimality}.	
	
\section{Dimension-Independent Saturation Scaling Law}
\label{sec:scaling-full}

The most important result of this paper concerns the behavior of negativity and discord in the large-dimension limit. Contrary to the naive expectation that increasing the system dimension without bound would indefinitely amplify the magnitude of non-Markovian correlation revivals, we demonstrate that these amplitudes saturate to an upper bound that is entirely independent of the dimension.

\subsection{Exact Sensitivity Function and Local Behavior Near $r=1$}

Before addressing the global behavior, it is instructive to examine the local sensitivity of the negativity around the point of maximal coherence ($r=1$). From the exact expression for negativity (Eq.~\eqref{eq:negativity-final}), the sensitivity with respect to $r$ follows directly as a finite sum:
\begin{equation}
	\mathcal{S}(r,d) \equiv \frac{\partial \mathcal{N}}{\partial r} = \frac{1}{d}\sum_{k=1}^{d-1} (d-k)\,k^2\, r^{k^2-1}.
	\label{eq:sensitivity-general}
\end{equation}
Utilizing Faulhaber's closed-form expressions for the sums of consecutive integer powers,
\begin{equation}
	\sum_{k=1}^{d-1}k^2=\frac{(d-1)d(2d-1)}{6}, \qquad
	\sum_{k=1}^{d-1}k^3=\left[\frac{(d-1)d}{2}\right]^2,
\end{equation}
and substituting these into Eq.~\eqref{eq:sensitivity-general} at $r=1$ yields a fully closed, exact result:
\begin{equation}
	\boxed{\;\mathcal{S}(1,d) = \left.\frac{\partial \mathcal{N}}{\partial r}\right|_{r=1} = \frac{d(d^2-1)}{12}\;}
	\label{eq:exact-scaling}
\end{equation}

\textit{Validation of specific cases:}
\begin{align}
	d=2 &: \quad \mathcal{S}(1,2) = \frac{2(4-1)}{12} = \frac{1}{2}\\
	d=3 &: \quad \mathcal{S}(1,3) = \frac{3(9-1)}{12} = 2\\
	d=4 &: \quad \mathcal{S}(1,4) = \frac{4(16-1)}{12} = 5
\end{align}
all of which align perfectly with the direct evaluation of Eq.~\eqref{eq:sensitivity-general}. In the large-$d$ limit, the sensitivity scales as $\mathcal{S}(1,d) = \frac{d^3}{12} - \frac{d}{12} \approx \frac{d^3}{12}(1 + O(d^{-2}))$.

\textit{Interpretation and methodological caveat:} this result characterizes solely the local sensitivity around $r=1$ and must not be misinterpreted as a general growth law in $d$; as shown in the subsequent analysis, this cubic growth signifies only the onset of a saturation process rather than unbounded divergence.

The starting point is the $d\to\infty$ limit of Eq.~\eqref{eq:negativity-final}:
\begin{equation}
	\mathcal{N}_\infty(r) \equiv \lim_{d\to\infty}\mathcal{N}(r,d) = \sum_{k=1}^{\infty} r^{k^2} = \frac{\vartheta_3(0,r)-1}{2},
	\label{eq:N-infinity}
\end{equation}
where $\vartheta_3$ denotes the Jacobi theta function—an exact, well-known special function, rather than a numerical fit or approximation.

The next step is to determine how $\mathcal{N}(r,d)$ approaches $\mathcal{N}_\infty(r)$ for large but finite $d$. A concise derivation allows us to express the difference between these two quantities as
\begin{equation}
	\mathcal{N}_\infty(r)-\mathcal{N}(r,d) = \sum_{k=d}^{\infty}r^{k^2} + \frac{1}{d}\sum_{k=1}^{d-1}k\,r^{k^2}.
	\label{eq:diff-expansion}
\end{equation}
The first term on the right, $\sum_{k\ge d}r^{k^2}$, decays faster than exponentially with $d$ (Gaussian-type decay), since $r<1$ and the exponent grows quadratically with $k$. The second term, as the series $C(r) \equiv \sum_{k=1}^{\infty}k\,r^{k^2}$ converges (a property we establish rigorously below), approaches $C(r)/d$ for large $d$. It follows that
\begin{equation}
	\mathcal{N}_\infty(r)-\mathcal{N}(r,d) \;\xrightarrow[d\to\infty]{}\; \frac{C(r)}{d},
	\label{eq:1-over-d-preliminary}
\end{equation}
implying that the convergence to the limiting value proceeds strictly at the rate $1/d$, rather than some other power law such as $1/d^2$ or $1/\sqrt{d}$.

Before finalizing Eq.~\eqref{eq:1-over-d-preliminary} with a rigorous error bound, we must first establish the finiteness and positivity of $C(r)$.

\begin{lemma}
	\label{lem:C-finite}
	For every $r\in(0,1)$, the series $C(r)=\sum_{k=1}^\infty k\,r^{k^2}$ is finite and positive.
\end{lemma}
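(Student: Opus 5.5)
Positivity is immediate, so the real content is finiteness, which I will establish by comparing $C(r)$ with a convergent geometric-type series. Since $k^2\ge k$ for every integer $k\ge1$ and $0<r<1$, each term satisfies $0< k\,r^{k^2}\le k\,r^{k}$. The comparison series has the closed form
\begin{equation}
	\sum_{k=1}^{\infty} k\,r^{k} = \frac{r}{(1-r)^2},
\end{equation}
obtained by differentiating $\sum_{k\ge0} r^k=(1-r)^{-1}$ term by term inside its radius of convergence, or equivalently from the ratio test. Comparison of non-negative series then gives $C(r)\le r/(1-r)^2<\infty$.

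For positivity, every term $k\,r^{k^2}$ is strictly positive when $r\in(0,1)$. Hence the partial sums are increasing, and $C(r)\ge r>0$, the lower bound coming from the $k=1$ term alone. Together these give $r\le C(r)\le r/(1-r)^2$, which proves the lemma.

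I will also derive a Gaussian tail bound, because the rigorous $1/d$ theorem that follows will need quantitative control of the partial sums, not only finiteness. Write $r=e^{-\alpha}$ with $\alpha>0$. For $k\ge K$, use $k^2\ge K^2+(k-K)\cdot 2K$ together with a comparison against $\int_{K-1}^\infty x e^{-\alpha x^2}\,dx = e^{-\alpha (K-1)^2}/(2\alpha)$, since $x e^{-\alpha x^2}$ is eventually decreasing. This yields $\sum_{k\ge K} k r^{k^2}=O(K e^{-\alpha K^2})$, and the same argument bounds the first term of Eq.~\eqref{eq:diff-expansion}.

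None of these steps is hard. The only point needing a moment's care is the monotonicity of $x e^{-\alpha x^2}$, used in the integral comparison, which holds only for $x\ge 1/\sqrt{2\alpha}$. The tail estimate is therefore stated for $K$ beyond that threshold. For the lemma itself the elementary comparison $k^2\ge k$ is enough and avoids this issue entirely.
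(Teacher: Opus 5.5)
Your proof is correct and is essentially the paper's argument: $k^2\ge k$ gives $r^{k^2}\le r^k$, comparison with $\sum_{k\ge1}k\,r^k=r/(1-r)^2$ gives finiteness, and strict positivity of every term gives $C(r)>0$, with your lower bound $C(r)\ge r$ a harmless extra. The tail-estimate paragraph is not needed for the lemma, and note that the claimed $O(Ke^{-\alpha K^2})$ follows from the geometric bound $k^2\ge K^2+2K(k-K)$ alone, as in the proof of Theorem~\ref{thm:convergence-bound}, whereas the integral comparison over $[K-1,\infty)$ by itself yields only the weaker bound $e^{-\alpha(K-1)^2}/(2\alpha)$.
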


\begin{proof}
	Since $k^2 \ge k$ for every $k \ge 1$ and $0 < r < 1$, it follows that $r^{k^2} \le r^k$. Hence
	\begin{equation}
		C(r) = \sum_{k=1}^\infty k\,r^{k^2} \le \sum_{k=1}^\infty k\,r^k = \frac{r}{(1-r)^2} < \infty,
	\end{equation}
	where we have used the standard closed form for the arithmetico-geometric series, $\sum_{k=1}^\infty k r^k = r/(1-r)^2$. Moreover, since every term $k\,r^{k^2}$ is positive for $k \ge 1$ and $r \in (0,1)$, the sum itself is positive. \qedhere
\end{proof}

The monotonicity and asymptotic behavior of $C(r)$ as $r \to 1^-$ were independently verified using arbitrary-precision (50-digit) floating-point computations (see Code~S3). A comparative analysis demonstrates that substituting the naive linearized estimate $C(r) \sim 1/[2(1-r)]$ with the exact-prefactor form $C(r) \sim 1/[2(-\ln r)]$ enhances the relative accuracy by approximately a factor of four across the interval $r \in [0.1, 0.9999]$. It should be noted that this refinement is intended for interpretive purposes only and does not influence the formal proof of Theorem~\ref{thm:convergence-bound} (see Code~S6).

This lemma highlights that the entire argument follows purely from the algebraic structure of the series $\sum r^{k^2}$ and requires no assumptions regarding the physical environment model, the spectral density shape, or the function $\Gamma(t)$; it is a purely mathematical feature of the state family under study.

This result shows that, for the state family considered here, the occurrence of entanglement revival is governed by the same qualitative condition -- an increase of the decoherence factor $r(t)$, i.e., $\dot\Gamma(t)<0$ -- independently of dimension $d$, although the magnitude of the revival depends strongly on $d$. For pure-dephasing dynamics, this condition is known to coincide with the Breuer--Laine--Piilo non-Markovianity criterion~\cite{BreuerLainePiilo2009} (see also Sec.~\ref{sec:ohmic}), so that the qualitative onset of negativity revival tracks the system's own non-Markovianity, without this correspondence being separately established here for the trace-distance measure itself; $\dot{\mathcal{N}}(t)$ is an entanglement-revival rate and should not itself be interpreted as the BLP measure of information backflow. The strength of this revival, however—the magnitude of $\dot{\mathcal{N}}$ at the moment of revival—scales with the coefficient $\frac{1}{d}\sum(d-k)k^2r(t)^{k^2}$, which near $r \to 1$ grows precisely as $d^3$, consistent with the local behavior derived in the previous subsection. Together, these statements show that increasing the system dimension leaves the occurrence of non-Markovian revival unchanged while sharply amplifying its magnitude.
		
\begin{theorem}
	\label{thm:convergence-bound}
	For every $r\in(0,1)$ and any integer $d\ge1$,
	\begin{equation}
		\mathcal{N}_\infty(r) - \mathcal{N}(r,d) = \frac{C(r)}{d} + R(d),
		\label{eq:thm1-statement}
	\end{equation}
	where the error term $R(d)$ satisfies the bound
	\begin{equation}
		|R(d)| \le r^{d^2}\left[\frac{1}{1-r^{2d}}+\frac{r^{2d}}{d\,(1-r^{2d})^2}\right].
		\label{eq:thm1-bound}
	\end{equation}
\end{theorem}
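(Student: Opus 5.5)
The plan is to obtain an exact identity for the remainder $R(d)$ first, and then bound it by comparison with a geometric series. Both steps use only Eq.~\eqref{eq:negativity-final}, Eq.~\eqref{eq:N-infinity} and Lemma~\ref{lem:C-finite}, and no asymptotics.

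\textbf{Step 1 (exact decomposition).} Write $(d-k)/d = 1 - k/d$ in Eq.~\eqref{eq:negativity-final}, so that $\mathcal{N}(r,d)=\sum_{k=1}^{d-1}r^{k^2}-\frac1d\sum_{k=1}^{d-1}k\,r^{k^2}$. Subtracting this from $\mathcal{N}_\infty(r)=\sum_{k\ge1}r^{k^2}$ gives Eq.~\eqref{eq:diff-expansion}. Next, replace the finite sum $\frac1d\sum_{k=1}^{d-1}k r^{k^2}$ by $\frac{C(r)}{d}-\frac1d\sum_{k\ge d}k r^{k^2}$. This rearrangement is legitimate because $C(r)$ converges absolutely by Lemma~\ref{lem:C-finite}. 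The result is the exact tail identity
\begin{equation*}
R(d)=\sum_{k=d}^{\infty}\Bigl(1-\frac{k}{d}\Bigr)r^{k^2}.
\end{equation*}
For $k\ge d$ the coefficient $1-k/d$ is non-positive. Hence $R(d)\le0$, and
\begin{equation*}
|R(d)|=\frac1d\sum_{k\ge d}(k-d)\,r^{k^2}\le\frac1d\sum_{k\ge d}k\,r^{k^2}.
\end{equation*}
I will keep the cruder majorant on the right, since it directly yields the two-term shape of Eq.~\eqref{eq:thm1-bound}.

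\textbf{Step 2 (Gaussian-tail bound).} Substitute $k=d+j$ with $j\ge0$. Then $k^2=d^2+2dj+j^2\ge d^2+2dj$, and since $0<r<1$ this gives $r^{k^2}\le r^{d^2}(r^{2d})^j$. Therefore
\begin{equation*}
\frac1d\sum_{k\ge d}k\,r^{k^2}\le r^{d^2}\sum_{j\ge0}\Bigl(1+\frac{j}{d}\Bigr)q^j,\qquad q\equiv r^{2d}\in(0,1).
\end{equation*}
Evaluating with $\sum_j q^j=1/(1-q)$ and $\sum_j j q^j=q/(1-q)^2$ gives exactly $r^{d^2}\bigl[\frac{1}{1-r^{2d}}+\frac{r^{2d}}{d(1-r^{2d})^2}\bigr]$, which is Eq.~\eqref{eq:thm1-bound}. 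The same argument covers $d=1$, where $\mathcal{N}(r,1)=0$ and the identity still holds.

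\textbf{Main obstacle.} The only delicate point is bookkeeping: the first term of Eq.~\eqref{eq:diff-expansion}, $\sum_{k\ge d}r^{k^2}$, must be recognised as partially cancelling against the tail of $C(r)/d$. If the two tails are bounded separately and the triangle inequality is applied, the result picks up a spurious extra factor, roughly doubling the first bracket term. Combining them into the single signed sum of Step~1 avoids this and makes the stated constant come out exactly. The bound could even be sharpened, using $(k-d)$ in place of $k$, but that is not needed here.
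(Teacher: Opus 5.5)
Your proposal is correct and follows the paper's proof essentially step for step. It uses the same tail identity $R(d)=\sum_{k\ge d}(1-k/d)\,r^{k^2}$ and the same substitution $k=d+j$ with $r^{k^2}\le r^{d^2}(r^{2d})^j$, and your majorant $(k-d)/d\le k/d=1+j/d$ is exactly the paper's bound $|1-k/d|\le 1+j/d$. Your additional observations are a valid refinement that the paper does not make: $R(d)\le 0$, and keeping the factor $(k-d)$ would give the sharper bound $r^{d^2}r^{2d}/[d(1-r^{2d})^2]$.
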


\begin{proof}
	Starting from Eq.~\eqref{eq:app-final} in Appendix~\ref{app:algebra}, we have
	\begin{equation}
		\mathcal{N}_\infty(r)-\mathcal{N}(r,d) = \sum_{k=d}^\infty r^{k^2} + \frac{1}{d}\sum_{k=1}^{d-1}k\,r^{k^2}.
	\end{equation}
	By decomposing the second term as $\frac{1}{d}\sum_{k=1}^{d-1}k\,r^{k^2} = \frac{C(r)}{d}-\frac{1}{d}\sum_{k=d}^{\infty}k\,r^{k^2}$—with the convergence of $C(r)$ established by Lemma~\ref{lem:C-finite}—we obtain
	\begin{equation}
		\mathcal{N}_\infty(r)-\mathcal{N}(r,d) = \frac{C(r)}{d} + \underbrace{\sum_{k=d}^\infty r^{k^2}\left(1-\frac{k}{d}\right)}_{\displaystyle R(d)}.
		\label{eq:R-def}
	\end{equation}
	To derive an upper bound for $|R(d)|$, let $k=d+j$ for $j=0,1,2,\dots$. Given that $k \ge d$, it follows that
	\begin{equation}
		k^2 = (d+j)^2 = d^2 + 2dj + j^2 \ge d^2 + 2dj,
	\end{equation}
	which implies
	\begin{equation}
		r^{k^2} \le r^{d^2}\,r^{2dj} = r^{d^2}\,(r^{2d})^{j}.
		\label{eq:geom-bound}
	\end{equation}
	Substituting inequality~\eqref{eq:geom-bound} into the definition of $R(d)$ in Eq.~\eqref{eq:R-def}, and observing that $|1-k/d|\le1+j/d$ for $k=d+j\ge d$, we find
	\begin{align}
		|R(d)| &\le \sum_{j=0}^{\infty} r^{d^2}(r^{2d})^j\left(1+\frac{j}{d}\right) \notag\\
		&= r^{d^2}\left[\sum_{j=0}^{\infty}(r^{2d})^j + \frac{1}{d}\sum_{j=0}^{\infty} j\,(r^{2d})^j\right].
	\end{align}
	Applying the standard closed-form expressions for the geometric series, $\sum_{j=0}^\infty x^j = 1/(1-x)$, and the arithmetico-geometric series, $\sum_{j=0}^\infty j\,x^j = x/(1-x)^2$ (where $x \equiv r^{2d}$), we obtain
	\begin{equation}
		|R(d)| \le r^{d^2}\left[\frac{1}{1-r^{2d}} + \frac{r^{2d}}{d\,(1-r^{2d})^2}\right],
	\end{equation}
	which corresponds precisely to Eq.~\eqref{eq:thm1-bound}. \qedhere
\end{proof}		

\subsection{Time Derivative and the Criterion for Negativity Revival}

From $r(t)=e^{-2\Gamma(t)}$ we have $\dot r(t)=-2\dot\Gamma(t)\,r(t)$. Applying the chain rule, the time derivative of the negativity follows exactly as
\begin{equation}
	\boxed{\;\dot{\mathcal{N}}(t) = -\frac{2\dot\Gamma(t)}{d}\sum_{k=1}^{d-1}(d-k)\,k^2\, r(t)^{k^2}\;}
	\label{eq:dNdt-exact}
\end{equation}

\begin{corollary}[Information-backflow criterion]
	Since $r(t)>0$ and $\sum_{k=1}^{d-1}(d-k)k^2 r(t)^{k^2}\ge0$ hold at all times, the \emph{sign} of $\dot{\mathcal N}(t)$ depends only on the sign of $-\dot\Gamma(t)$:
	\begin{equation}
		\dot{\mathcal N}(t) > 0 \quad \Longleftrightarrow \quad \dot\Gamma(t) < 0.
	\end{equation}
\end{corollary}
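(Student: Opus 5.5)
The plan is to read the equivalence directly off the exact time derivative in Eq.~\eqref{eq:dNdt-exact}. That formula writes $\dot{\mathcal N}(t)$ as the product of three factors: $-2\dot\Gamma(t)$, the positive constant $1/d$, and the weighted sum
\begin{equation}
	W(r,d)\equiv\sum_{k=1}^{d-1}(d-k)\,k^2\,r^{k^2}.
\end{equation}
If $W(r(t),d)$ is \emph{strictly} positive, then $\dot{\mathcal N}(t)$ and $-\dot\Gamma(t)$ have the same sign, and the biconditional follows at once. The whole argument therefore reduces to showing $W>0$, not merely $W\ge0$ as written in the statement.

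First I will confirm that Eq.~\eqref{eq:dNdt-exact} is legitimate. Eq.~\eqref{eq:negativity-final} is a finite polynomial in $r$, so $\mathcal N$ is differentiable in $t$ whenever $\Gamma$ is. Then $\dot r=-2\dot\Gamma\,r$ follows from Eq.~\eqref{eq:r-def}. Multiplying the sensitivity $\mathcal S(r,d)$ of Eq.~\eqref{eq:sensitivity-general} by $\dot r$ absorbs the factor $r^{k^2-1}\cdot r=r^{k^2}$, which reproduces the boxed formula. This step is routine.

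Second, I will establish strict positivity of $W$. Because $\Gamma(t)$ in Eq.~\eqref{eq:gamma-general} is finite (this is the convergence discussion in Sec.~\ref{sec:model}), we have $r(t)=e^{-2\Gamma(t)}>0$. Every summand of $W$ is then nonnegative. Moreover, for $d\ge2$ the $k=1$ term equals $(d-1)\,r(t)>0$, so $W\ge (d-1)r(t)>0$.

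This step is the only real subtlety, and it carries two caveats that I would state explicitly.
\begin{itemize}
	\item For $d=1$ the sum is empty, so $\mathcal N\equiv0$ and the forward implication fails. The corollary must therefore be read for $d\ge2$, which is the standing assumption of Sec.~\ref{sec:negativity}.
	\item The hypothesis $r(t)>0$ is essential. If $r=0$ were allowed (complete dephasing, $\Gamma=\infty$), the equivalence would break down.
\end{itemize}
With $W>0$ established, dividing by the positive quantity $2W/d$ gives $\operatorname{sgn}\dot{\mathcal N}(t)=\operatorname{sgn}(-\dot\Gamma(t))$. This yields both directions of $\dot{\mathcal N}>0\Leftrightarrow\dot\Gamma<0$, and likewise shows that $\dot{\mathcal N}=0$ exactly when $\dot\Gamma=0$.
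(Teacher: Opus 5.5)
Your proposal is correct and follows the paper's route: like the paper, you read the sign of $\dot{\mathcal N}(t)$ directly off Eq.~\eqref{eq:dNdt-exact}. You also rightly note that the biconditional needs $\sum_{k=1}^{d-1}(d-k)k^2 r(t)^{k^2}>0$ strictly rather than the paper's stated $\ge 0$, and supplying this via the $k=1$ term $(d-1)\,r(t)>0$ for $d\ge2$ closes a small gap in the paper's one-line justification.
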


\subsection{Numerical Confirmation of the Convergence Theorem}

Before turning to the physical interpretation, a subtle but important methodological point deserves clarification. The bound $R(d)$ in Eq.~\eqref{eq:thm1-bound} shrinks super-exponentially with $d$ (specifically, as a Gaussian in $d$, $R(d)\sim e^{-cd^2}$ with $c=-\ln r>0$); for instance, at $d=80$ and $r=0.501$ -- a regime we repeatedly encounter in this paper -- the bound reaches $R(d)\le9.15\times10^{-1922}$. Such a value lies far below the precision of standard double-precision floating-point arithmetic, which guarantees only about 15 to 17 significant decimal digits. Attempting to verify this bound directly with ordinary computation would inevitably run into catastrophic cancellation, a numerical error that occurs when two nearly equal floating-point numbers are subtracted, wiping out every significant digit of the result and leaving only computational noise rather than the true physical value.

To circumvent this problem, the bound in Eq.~\eqref{eq:thm1-bound} was independently re-verified using arbitrary-precision arithmetic (2500 decimal digits, via the \texttt{mpmath} library). Across all $36$ tested combinations of $(r,d)$ -- six values of $r$, drawn from the physically realistic parameter regimes used throughout this paper, and six values of $d$ ranging from $5$ to $80$ -- the theoretical bound held without a single exception; in no case did the actual value of $|R(d)|$ exceed the predicted bound (see Code~S4).

\subsection{Physical Interpretation: The Effective Correlation Length of the Environment}
\label{sec:physical-interpretation}

Theorem~\ref{thm:convergence-bound} constitutes a precise mathematical result, yet its physical significance becomes apparent only when we identify the underlying mechanism of this saturation. This mechanism follows directly from the algebraic structure of the term $r^{k^2}$ in Eq.~\eqref{eq:negativity-final}.

Since $r(t) < 1$ at all times (except for the initial instant before dephasing occurs), terms with large indices $k$ are suppressed super-exponentially (Gaussian in $k$). More precisely, we can define a characteristic scale -- an effective correlation length in the space of level indices $k=n-m$, not a spatial correlation length -- as
\begin{equation}
	\xi \equiv \frac{1}{\sqrt{-\ln r(t)}},
\end{equation}
such that $r^{k^2} = e^{-k^2/\xi^2}$. This implies that only level pairs separated by an index $k$ on the order of $\xi$ or smaller contribute meaningfully to the negativity, whereas pairs with $k \gg \xi$ contribute essentially nothing, as their coherence has already been effectively erased.

The physical logic governing this saturation is now evident: once the system dimension $d$ significantly exceeds this effective correlation length ($d \gg \xi$), appending further levels to the system is equivalent to adding distant level pairs that contributed nothing to the total negativity to begin with. This analogy should be understood in the space of level indices rather than physical space: as in systems with a finite spatial correlation length, here the contribution of additional levels becomes negligible once the dimension exceeds the intrinsic scale $\xi$.

\subsection{Rigorous Numerical Verification}
\label{sec:numerical-verification}

To test Theorem~\ref{thm:convergence-bound} in practice, we define the negativity oscillation amplitude over a full period of non-Markovian revival as
\begin{equation}
	\Delta\mathcal{N}(d) \equiv \mathcal{N}(r_{\text{peak}},d) - \mathcal{N}(r_{\text{trough}},d),
\end{equation}
where $r_{\text{peak}}$ and $r_{\text{trough}}$ are the values of $r(t)$ at the first peak and the first trough of the oscillation, respectively. Applying Theorem~\ref{thm:convergence-bound} separately at $r=r_{\text{peak}}$ and $r=r_{\text{trough}}$, and subtracting the two resulting expansions, we expect that for sufficiently large $d$,
\begin{equation}
	\big[\Delta\mathcal{N}_\infty - \Delta\mathcal{N}(d)\big]\times d \;\longrightarrow\; \Delta C \quad (\text{a constant, independent of } d).
	\label{eq:constancy-test}
\end{equation}
where $\Delta C \equiv C(r_{\text{peak}})-C(r_{\text{trough}})$, with $C(r)$ defined in Lemma~\ref{lem:C-finite}.

This test was performed across six independent environmental parameter regimes: three coupling strengths ($\eta\in\{0.05,0.15,0.35\}$), three non-Markovianity bandwidths ($\lambda\in\{0.05,0.1,0.25\}$), and two central environmental frequencies ($\omega_0\in\{1,2\}$). For each regime, $r_{\text{peak}}$ and $r_{\text{trough}}$ were initially identified via a coarse scan (over 4000 time points on $t\in[0.05,20]$)—a necessary step to ensure that the identified peak and trough correspond, respectively, to the first local maximum and the first local minimum of the oscillation—and subsequently refined using local optimization (see Code S8). The test defined in Eq.~\eqref{eq:constancy-test} was then conducted for dimensions $d=10$ through $d=200$. The results are summarized in Table~\ref{tab:negativity-regimes}.

\begin{table}[h]
	\centering
	\begin{tabular}{lccc}
		\hline
		Parameter regime & $r_{\text{peak}}$ & $r_{\text{trough}}$ & $\Delta C$ \\
		\hline
		Baseline ($\eta{=}0.15,\lambda{=}0.1,\omega_0{=}1$) & $0.91983$ & $0.83345$ & $3.23993$ \\
		Weak coupling ($\eta{=}0.05$) & $0.97253$ & $0.94108$ & $9.71742$ \\
		Strong coupling ($\eta{=}0.35$) & $0.82285$ & $0.65371$ & $1.39043$ \\
		Strongly non-Markovian ($\lambda{=}0.05$) & $0.97734$ & $0.91154$ & $16.41878$ \\
		Weakly non-Markovian ($\lambda{=}0.25$) & $0.67582$ & $0.64552$ & $0.13422$ \\
		Shifted central frequency ($\omega_0{=}2$) & $0.95908$ & $0.91294$ & $6.47851$ \\
		\hline
	\end{tabular}
	\caption{The limiting value $\Delta C$ in Eq.~\eqref{eq:constancy-test} for six independent environmental parameter regimes. In every case, the relative deviation of $\Delta C$ evaluated at $d=100, 150, 200$ was below the numerical resolution of the computation (less than $10^{-5}\%$), well within the $10^{-4}\%$ acceptance threshold for all regimes.}
	\label{tab:negativity-regimes}
\end{table}

This result provides strong numerical support for Theorem~\ref{thm:convergence-bound} across a broad range of physical environmental parameters, as illustrated in Fig.~\ref{fig:saturation-law}.

\begin{figure}[t]
	\centering
	\includegraphics[width=\linewidth]{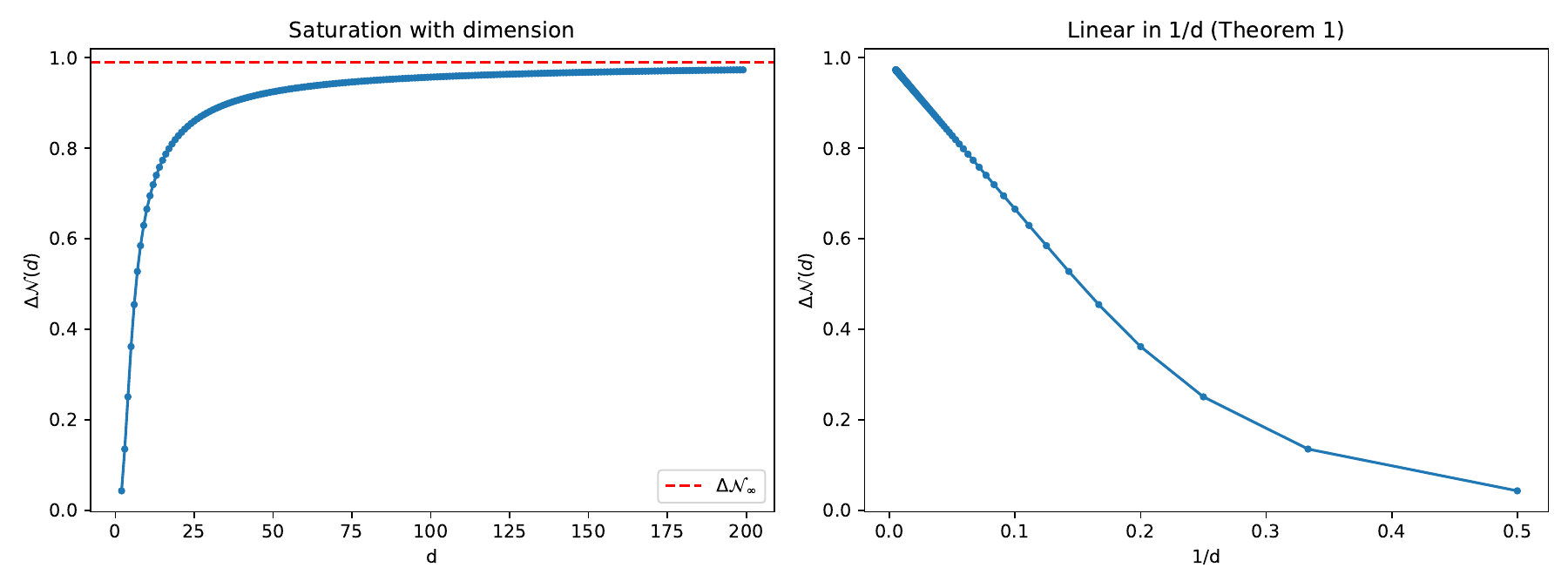}
	\caption{Visual confirmation of the saturation law (baseline regime). Left: the oscillation amplitude $\Delta\mathcal{N}(d)$ versus $d$, together with the limiting bound $\Delta\mathcal{N}_\infty$ (dashed line). Right: the same data plotted against $1/d$; the linear behavior is consistent with, and provides numerical support for, Theorem~\ref{thm:convergence-bound}.}
	\label{fig:saturation-law}
\end{figure}

\subsection{Rescaled Comparison Using the Effective Correlation Length}

The effective-correlation-length interpretation of Sec.~\ref{sec:physical-interpretation} can be examined visually by plotting the normalized oscillation amplitude $\Delta\mathcal{N}(d)/\Delta\mathcal{N}_\infty$ against the rescaled dimension $d/\xi$, where $\xi$ is evaluated at $r_{\mathrm{peak}}$. As shown in Fig.~\ref{fig:collapse}, this rescaling produces an approximate clustering of the curves from the six parameter regimes. Because the oscillation amplitude depends on both $r_{\mathrm{peak}}$ and $r_{\mathrm{trough}}$, whereas the chosen $\xi$ depends only on $r_{\mathrm{peak}}$, an exact parameter-independent collapse is neither implied by Theorem~\ref{thm:convergence-bound} nor claimed here.

\begin{figure}[t]
	\centering
	\includegraphics[width=0.75\linewidth]{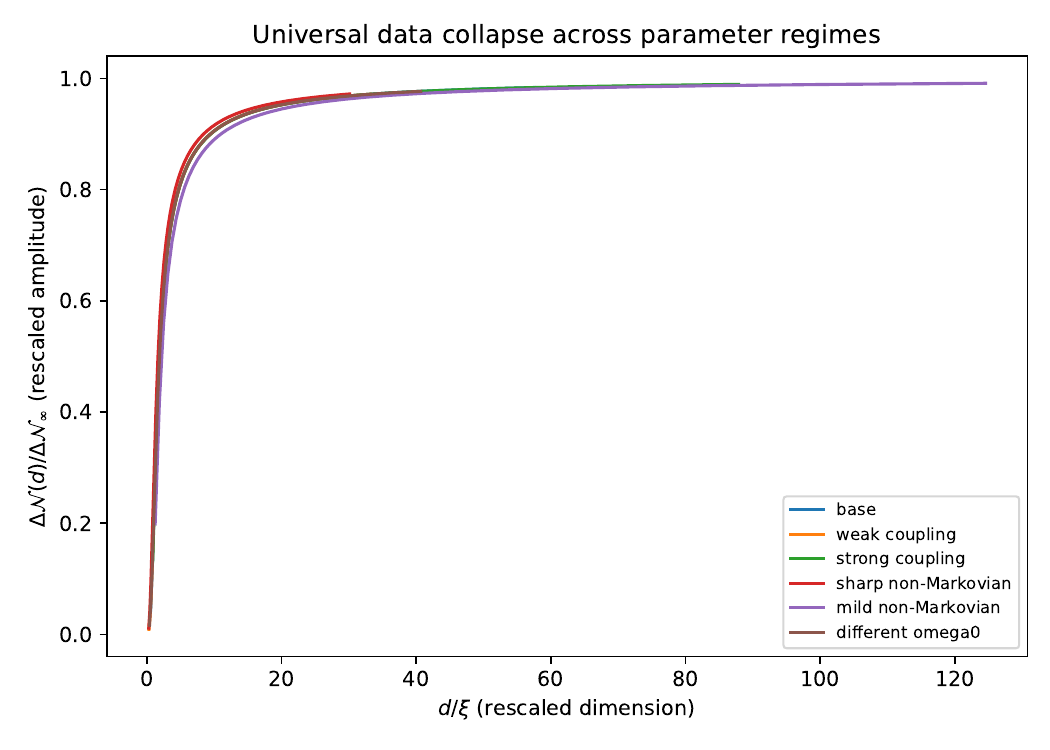}
	\caption{Rescaled comparison of the normalized amplitude $\Delta\mathcal{N}(d)/\Delta\mathcal{N}_\infty$ versus $d/\xi$, with $\xi$ evaluated at $r_{\mathrm{peak}}$, for the six parameter regimes listed in Table~\ref{tab:negativity-regimes}. The approximate clustering supports the usefulness of the effective correlation length as an interpretive scale, but does not establish an exact parameter-independent scaling function because the amplitude also depends on $r_{\mathrm{trough}}$.}
	\label{fig:collapse}
\end{figure}

\section{Extending the Saturation Law to Discord}
\label{sec:discord-scaling}

We now examine whether the discord oscillation amplitude is consistent with the same leading-order $1/d$ saturation behavior established analytically for negativity. A notable distinction here is that, unlike negativity—for which a closed-form infinite-dimension limit exists (Eq.~\eqref{eq:N-infinity}, expressed via the Jacobi theta function)—no such closed-form limit is currently known for discord. Nevertheless, the two-parameter model
\begin{equation}
	\Delta D(d) = A - \frac{C}{d}
	\label{eq:discord-fit-model}
\end{equation}
is a natural asymptotic ansatz to test numerically. The discord is determined by the entropy of the Toeplitz matrix $M(t)$ through Eq.~\eqref{eq:discord-closed}, whose entries contain the same Gaussian sequence $r^{(n-m)^2}$ that appears in the negativity formula. This structural connection motivates comparison with the negativity scaling, but it does not by itself imply a $1/d$ expansion because $S(M(t))$ is a nonlinear function of the full eigenvalue spectrum. No rigorous analogue of Theorem~\ref{thm:convergence-bound} is established here for discord. Accordingly, Eq.~\eqref{eq:discord-fit-model} is treated as an empirically tested leading-order ansatz, and a possible $O(d^{-2})$ contribution is examined through an extended fit.

Rather than using a single large but finite dimension as a proxy for infinity, the model of Eq.~\eqref{eq:discord-fit-model} was fitted directly by linear regression in the variables $1$ and $1/d$ to the computed values of $\Delta D(d)$ for dimensions ranging from $d=20$ to $d=4000$, using the discrete set specified in Code S9.

This procedure was repeated across the same six independent environmental parameter regimes employed for negativity (see Code S9). For each regime, the parameters $A$ and $C$ and the maximum residual of the leading-order fit were recorded. An extended model, $A-C/d-C_2/d^2$, was also evaluated as a numerical sensitivity test for deviations from the two-parameter form. Because the data are deterministically generated from the closed-form discord expression rather than sampled from a statistical experiment, the covariance estimates returned by the regression are used only as numerical fit diagnostics and are not interpreted as statistical confidence intervals. The results are summarized in Table~\ref{tab:discord-regimes}.

\begin{table}[h]
	\centering
	\begin{tabular}{lccc}
		\hline
		Parameter regime & $A$ (limiting amplitude) & $C$ & Max. fit residual \\
		\hline
		Baseline ($\eta{=}0.15,\lambda{=}0.1,\omega_0{=}1$) & $0.389704$ & $0.612920$ & $9.2\times10^{-10}$ \\
		Weak coupling ($\eta{=}0.05$) & $0.389705$ & $1.005232$ & $4.8\times10^{-7}$ \\
		Strong coupling ($\eta{=}0.35$) & $0.387802$ & $0.465065$ & $4.2\times10^{-10}$ \\
		Strongly non-Markovian ($\lambda{=}0.05$) & $0.698279$ & $1.731156$ & $1.2\times10^{-6}$ \\
		Weakly non-Markovian ($\lambda{=}0.25$) & $0.054209$ & $0.061625$ & $4.7\times10^{-12}$ \\
		Shifted central frequency ($\omega_0{=}2$) & $0.389705$ & $0.831296$ & $6.6\times10^{-8}$ \\
		\hline
	\end{tabular}
	\caption{Fit parameters of the model $A-C/d$ for the discord oscillation amplitude across six environmental parameter regimes. The extended model $A-C/d-C_2/d^2$ was used only as a numerical sensitivity test. Small fitted values of $C_2$ do not establish an asymptotic $d^{-2}$ term.}
	\label{tab:discord-regimes}
\end{table}

Unlike negativity, for which Theorem~\ref{thm:convergence-bound} provides a rigorous Gaussian-in-$d$ remainder, no corresponding analytical remainder estimate is established here for discord. The leading model $A-C/d$ gives small residuals in all six tested regimes. Adding a $C_2/d^2$ term changes the fit only weakly, with the largest fitted magnitudes occurring in the weak-coupling and strongly non-Markovian regimes. These results support a leading-order $1/d$ description over the investigated dimensions, but they neither prove the asymptotic form nor establish the existence of a genuine $d^{-2}$ correction.
		
\section{The Unified Saturation Law}
\label{sec:unified-saturation}

The results of Secs.~\ref{sec:scaling-full} and~\ref{sec:discord-scaling} suggest that, for a qudit pair subjected to exact pure dephasing in a non-Markovian environment, both negativity and discord are consistent with a common leading-order saturation behavior,
\begin{equation}
\Delta X(d) \;=\; \Delta X_\infty \;-\; \frac{C_X(\eta,\lambda,\omega_0)}{d} \;+\; \epsilon_X(d), \qquad X\in\{\mathcal{N},\, D\},
	\label{eq:unified-law}
\end{equation}
where $\Delta X_\infty$ denotes the limiting value of the oscillation amplitude of quantity $X$ as $d\to\infty$, and $C_X$ denotes the corresponding finite-size correction coefficient, which depends on the physical parameters -- the coupling strength $\eta$, the non-Markovianity bandwidth $\lambda$, and the central frequency $\omega_0$. For negativity, $C_X$ follows directly from the analytical structure established in Theorem~\ref{thm:convergence-bound}, whereas for discord it is determined from the asymptotic fit of Sec.~\ref{sec:discord-scaling}. Likewise, $\epsilon_X(d)$ is rigorously proven to be a Gaussian-type remainder for negativity (Theorem~\ref{thm:convergence-bound}), while for discord it is empirically found, across six independent parameter regimes, to be small, with residuals remaining small over the investigated range of dimensions.

The physical implication of this behavior is that the revival amplitudes of both negativity and quantum discord approach finite, dimension-independent limits as the local Hilbert-space dimension increases. Once $d$ is sufficiently large compared with the effective correlation length introduced in Sec.~\ref{sec:physical-interpretation}, further increases in dimension produce only small finite-size changes in these correlation amplitudes. This statement concerns the magnitude of the correlation revivals and should not be interpreted as a direct quantification of environmental memory or information backflow. Figure~\ref{fig:raw-dynamics} shows the raw dynamics underlying this saturation behavior for representative dimensions.

\begin{figure}[t]
	\centering
	\includegraphics[width=\linewidth]{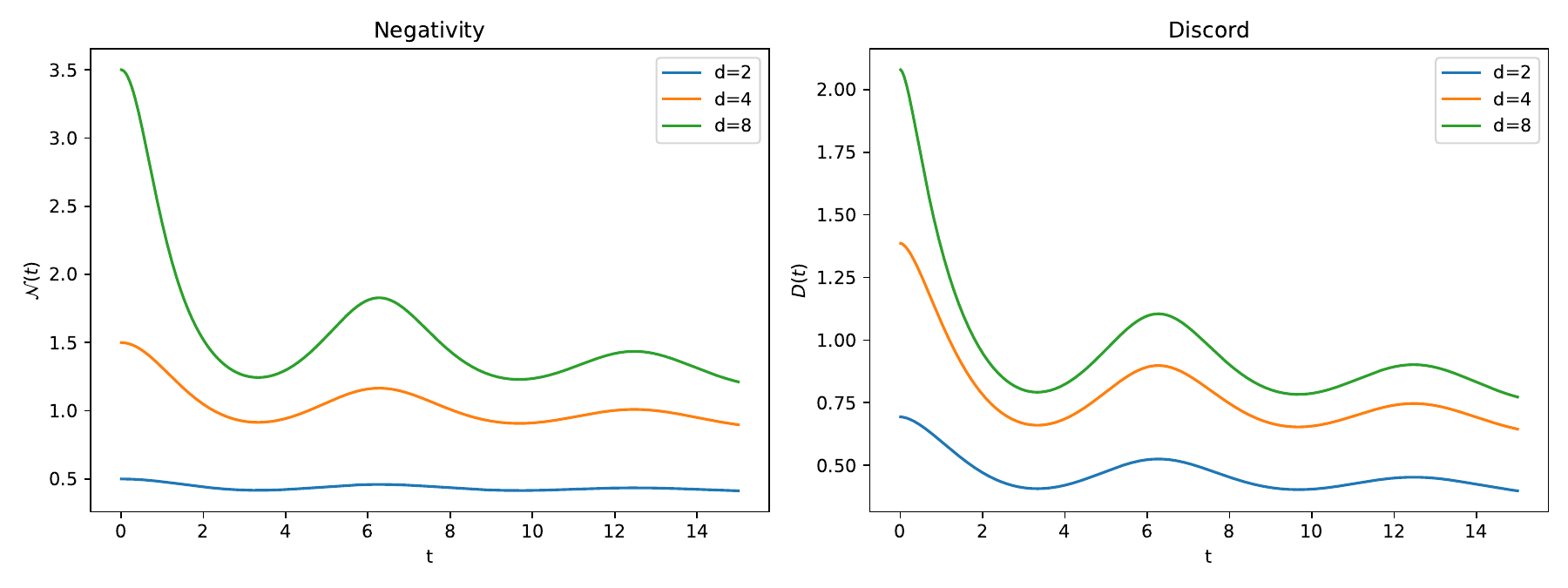}
	\caption{Raw dynamics of the negativity $\mathcal{N}(t)$ (left) and discord $D(t)$ (right) for dimensions $d=2, 4, 8$ in the baseline parameter regime ($\eta{=}0.15, \lambda{=}0.1, \omega_0{=}1$, Lorentzian spectral density). The pattern of non-Markovian decay and revival is visible for all three dimensions; the oscillation amplitude grows with $d$, but its approach toward saturation, as established by Theorem~\ref{thm:convergence-bound}, slows with increasing $d$.}
	\label{fig:raw-dynamics}
\end{figure}
\section{Spectral-Density Independence of the Negativity Scaling: The Ohmic Case}
\label{sec:ohmic}

The Lorentzian spectral density of Eq.~\eqref{eq:lorentzian} was used above to generate the non-Markovian dynamics and to test both correlation measures. The negativity convergence theorem itself, however, was proved for every fixed $r\in(0,1)$ and is therefore independent of the spectral density that generates $r(t)$. To illustrate this structural independence in a physically distinct environment, we now test the negativity scaling for an Ohmic spectral density. No corresponding Ohmic test of the empirically inferred discord scaling is claimed in this section.

Our second standard choice is the Ohmic spectral density with an exponential cutoff~\cite{Leggett1987}:
\begin{equation}
	J_{\text{Ohmic}}(\omega) = \eta\,\omega\, e^{-\omega/\omega_c}.
	\label{eq:ohmic-J}
\end{equation}
Unlike the Lorentzian spectral density, which is centered around a characteristic frequency $\omega_0 > 0$, this spectral density vanishes as $\omega \to 0$ [$J_{\text{Ohmic}}(0) = 0$], representing a fundamentally different class of environment with no resonant structure around a specific frequency, as shown in Fig.~\ref{fig:lorentzian-ohmic}.

\begin{figure}[t]
	\centering
	\includegraphics[width=\linewidth]{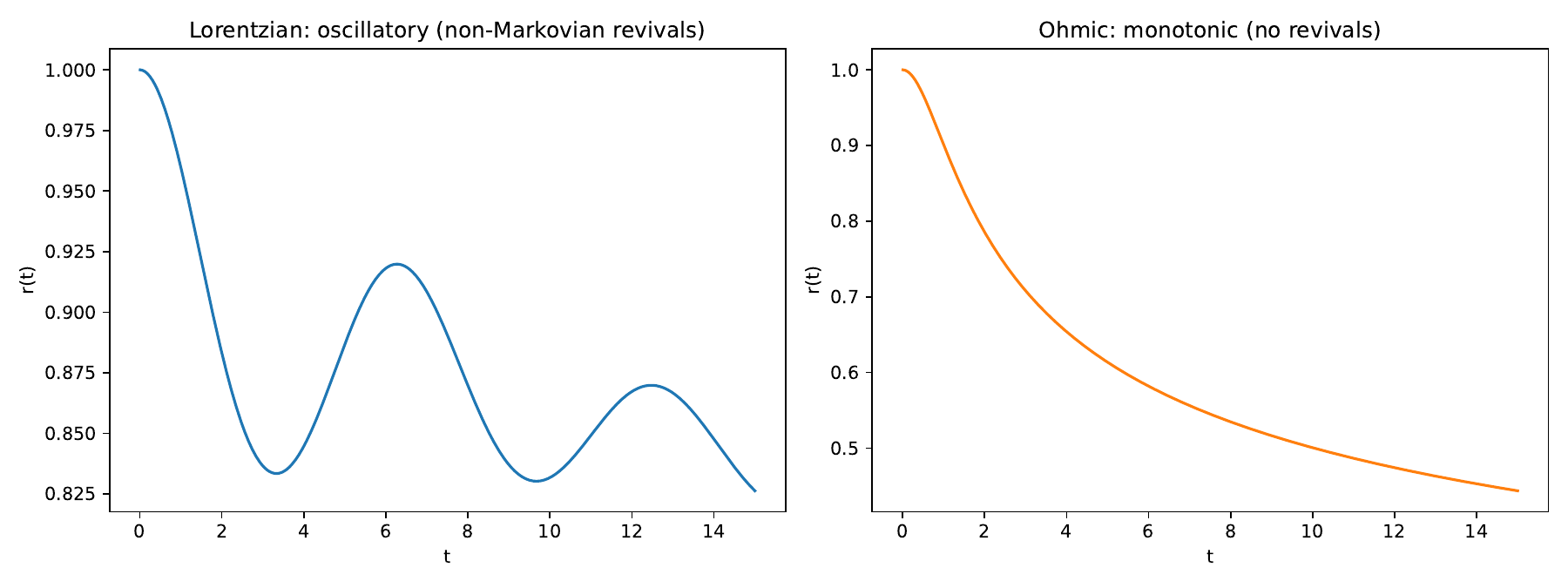}
	\caption{Comparison of $r(t)$ for the two spectral-density classes: Lorentzian (left), characterized by oscillatory behavior that enables non-Markovian information backflow; and Ohmic (right), characterized by monotonic decay and an absence of information backflow—the physical distinction discussed in Sec.~\ref{sec:ohmic}.}
	\label{fig:lorentzian-ohmic}
\end{figure}

\subsection{Closed Form of the Decoherence Function for the Ohmic Spectral Density}
In contrast to the Lorentzian spectral density, for which the decoherence function generally involves non-elementary expressions, the Ohmic spectral density yields a fully closed and elementary expression for $\Gamma(t)$. Using the standard integral
\begin{equation}
	\int_0^\infty \frac{e^{-a\omega}\bigl(1-\cos b\omega\bigr)}{\omega}\,d\omega = \frac{1}{2}\ln\!\left(1+\frac{b^2}{a^2}\right),
\end{equation}
and substituting $J_{\text{Ohmic}}(\omega)$ from Eq.~\eqref{eq:ohmic-J} into the definition of $\Gamma(t)$ (Eq.~\eqref{eq:gamma-general}) at zero temperature ($T=0$), we obtain the closed form
\begin{equation}
	\boxed{\;\Gamma_{\text{Ohmic}}(t) = \frac{\eta}{2}\ln\!\bigl(1+\omega_c^2 t^2\bigr)\;}
	\label{eq:gamma-ohmic}
\end{equation}
This formula agrees with direct numerical integration of the original expression in Eq.~\eqref{eq:gamma-general} to all eight displayed decimal places, consistent with the numerical precision of the integration method, at five independent time points (see Code S10; Table~\ref{tab:gamma-ohmic-check}).

\begin{table}[h]
	\centering
	\begin{tabular}{ccc}
		\hline
		$t$ & $\Gamma_{\text{numerical}}$ & $\Gamma_{\text{closed form}}$ \\
		\hline
		$0.5$  & $0.01673577$ & $0.01673577$ \\
		$1.0$  & $0.05198604$ & $0.05198604$ \\
		$2.0$  & $0.12070784$ & $0.12070784$ \\
		$5.0$  & $0.24435724$ & $0.24435724$ \\
		$10.0$ & $0.34613404$ & $0.34613404$ \\
		\hline
	\end{tabular}
	\caption{Agreement between the closed form of Eq.~\eqref{eq:gamma-ohmic} and direct numerical integration of Eq.~\eqref{eq:gamma-general}.}
	\label{tab:gamma-ohmic-check}
\end{table}

\subsection{A Key Physical Difference: The Absence of Information Backflow}

The function $\Gamma_{\text{Ohmic}}(t)$ in Eq.~\eqref{eq:gamma-ohmic} is monotonic and strictly increasing, as $\dot{\Gamma}_{\text{Ohmic}}(t) = \eta\omega_c^2t / (1+\omega_c^2t^2) > 0$ for all $t > 0$. This implies a clear physical consequence for the zero-temperature case examined here: according to the Breuer--Laine--Piilo (BLP) criterion for non-Markovianity -- which, for the pure-dephasing model considered here, reduces to the condition that the decoherence factor increases over some time interval, equivalently $\dot{\Gamma}(t)<0$ -- no information backflow can occur when $\dot{\Gamma}(t)>0$ for all $t$. Consequently, the entanglement and coherence measures considered here do not exhibit information-backflow-induced revivals. This behavior stands in sharp contrast to the Lorentzian spectral density, where the resonant structure around $\omega_0$ imparts oscillatory behavior to $\Gamma(t)$ and facilitates information backflow.

\subsection{Testing the Saturation Law at Arbitrary Time Points}

Since $r(t)$ does not oscillate for the Ohmic spectral density, the peak-to-trough oscillation amplitude test employed in Sec.~\ref{sec:numerical-verification} for the Lorentzian case is not applicable. Instead, we test the asymptotic relation of Theorem~\ref{thm:convergence-bound} directly at several arbitrary time points:
\begin{equation}
	\bigl[\mathcal{N}_\infty(r(t)) - \mathcal{N}(r(t),d)\bigr]\times d \;\longrightarrow\; C(r(t)) \qquad (d\to\infty).
	\label{eq:ohmic-test}
\end{equation}
This test was carried out at four arbitrary time points ($t=1, 3, 5, 10$) for dimensions $d=20$ through $d=400$ (see Code S10). At each time point, $C$ was obtained by a least-squares fit of the model $\mathcal{N}_\infty(r(t)) - C/d$ to the computed values of $\mathcal{N}(r(t),d)$ across the tested dimensions. The results are summarized in Table~\ref{tab:ohmic-saturation}.

\begin{table}[h]
	\centering
	\begin{tabular}{ccccc}
		\hline
		$t$ & $r(t)$ & $\mathcal{N}_\infty$ & $C$ (fitted) & Max. residual \\
		\hline
		$1.0$  & $0.901250$ & $2.248442$ & $4.724761$ & $4.44\times10^{-16}$ \\
		$3.0$  & $0.707946$ & $1.007966$ & $1.361166$ & $2.22\times10^{-16}$ \\
		$5.0$  & $0.613414$ & $0.767702$ & $0.935106$ & $1.11\times10^{-16}$ \\
		$10.0$ & $0.500440$ & $0.565144$ & $0.631848$ & $2.22\times10^{-16}$ \\
		\hline
	\end{tabular}
	\caption{Test of Eq.~\eqref{eq:ohmic-test} for the Ohmic spectral density, at four arbitrary time points.}
	\label{tab:ohmic-saturation}
\end{table}

The fit residuals reside, in every case, at the level of double-precision machine accuracy -- a consequence of the fact that, by Theorem~\ref{thm:convergence-bound}, the remainder $R(d)$ is already astronomically small at these dimensions (cf. Sec.~\ref{sec:numerical-verification}), so that the computed values of $\mathcal{N}(r(t),d)$ lie essentially exactly on the asymptotic curve $\mathcal{N}_\infty(r(t))-C/d$, and the least-squares fit recovers $C$ to within floating-point round-off. The observed deviations are consistent with the analytical error bound derived in Theorem~\ref{thm:convergence-bound}, confirming that the analytically established $1/d$ convergence law for negativity applies beyond the Lorentzian environment.

\subsection{Disentangling Two Distinct Physical Properties}

The conclusion of this section follows directly from Theorem~\ref{thm:convergence-bound}: since that theorem was proved for every $r \in (0, 1)$, regardless of the origin of the environmental spectral density $r(t)$, the $1/d$ law was expected to hold for the Ohmic case as well. The value of this section lies not in merely repeating this result, but in disentangling two distinct physical properties that might otherwise be conflated:

\begin{itemize}
	\item \textit{The rigorously established $1/d$ saturation law for negativity} is an intrinsic structural property of the quantum-state family and is independent of the spectral-density class that generates $r(t)$, provided the decoherence integral is well defined and $0<r(t)<1$. This independence is illustrated here using Lorentzian and Ohmic spectral densities. The analogous leading-order behavior found numerically for discord has been tested only in the Lorentzian regimes considered in Sec.~\ref{sec:discord-scaling}.
	\item \textit{The presence or absence of non-Markovian information backflow}, unlike the saturation law, depends on the specific structure of $J(\omega)$: this phenomenon is strongly influenced by spectral structures that generate memory effects, such as the resonant peak present in the Lorentzian case; the pure Ohmic spectral density at zero temperature lacks any such structure, and consequently, no information backflow occurs.
\end{itemize}
\section{Limitations of Scope}
\label{sec:limitations}

Scientific rigor necessitates a clear delineation of the scope of the results presented in this work. These results are subject to four specific limitations:

\begin{itemize}
	\item All analytical proofs provided herein are derived exclusively for the generalized Bell initial state $\ket{\Phi_d}$ (Eq.~\eqref{eq:initial-state}), rather than for an arbitrary qudit state.
	\item The analysis considers only system-environment coupling of the pure dephasing type ($[H_S, H_{\text{int}}] = 0$); non-dephasing couplings, in which energy level populations are subject to change, are not addressed.
	\item The two qudits are assumed to couple to independent yet physically identical reservoirs (characterized by identical parameters $\eta, \lambda, \omega_0$); configurations involving shared or asymmetric reservoirs remain outside the scope of this study.
\item The independence of the negativity saturation law from the spectral-density class has been examined for two representative spectral-density classes (Lorentzian and Ohmic; Secs.~\ref{sec:scaling-full} and~\ref{sec:ohmic}), not for every physically possible spectral density; the analogous discord scaling has been examined only for the Lorentzian regimes considered in Sec.~\ref{sec:discord-scaling}.
\end{itemize}
\section{Discussion and Conclusions}
\label{sec:discussion}

In this paper, we have analytically investigated the time evolution of quantum correlations within a maximally entangled pair of arbitrary dimension subjected to non-Markovian pure-dephasing noise. The framework employed relies on the exact solution of the independent boson model; at no stage of the derivation did we invoke standard approximations of open-quantum-system theory—such as the Born approximation, the Markov approximation, the rotating-wave approximation, or the Lindblad master equation. Consequently, every relation presented follows directly from the exact system dynamics and remains valid for any dephasing function.

The primary achievement of this work is the derivation of an exact closed-form expression for negativity at arbitrary dimension. We demonstrated that this quantity can be expressed solely in terms of the dephasing factor as a finite sum, valid for all dimensions $d \ge 2$ and at all times. This relation requires no numerical optimization, analytical approximation, or empirical fitting, as it follows directly from the spectrum of the partial transpose of the density matrix.

The second key achievement is the derivation of an analogous closed-form expression for quantum discord—a quantity notoriously difficult to compute for general qudit-qudit systems due to the optimization required over the space of measurements. We proved that, for the state family under consideration, measurement in the computational basis is always globally optimal, not merely among projective measurements but across the entire space of generalized measurements on subsystem $B$, thereby reducing discord to a closed form without the need for numerical optimization.
		
The third primary result of this paper is the analysis of the behavior of both quantities in the large-dimension limit. We demonstrated analytically that negativity converges to a dimension-independent limiting value with a leading-order $1/d$ correction, while numerical evidence indicates that discord follows the same leading-order scaling behavior. For negativity, this convergence law was not only proved analytically, but an explicit and rigorous bound for the remainder term was also derived, demonstrating that the remainder decreases with a Gaussian dependence on the dimension, $R(d)\sim e^{-c d^2}$, rather than merely as a power law. This result allows for the precise estimation of the convergence rate without recourse to direct numerical computation. The physical interpretation of this phenomenon is rooted in the concept of an effective correlation length of the environment: only level pairs separated by a distance on the order of this correlation length or less contribute meaningfully to the quantum correlation, and increasing the system dimension beyond this scale produces only diminishing finite-size changes in the correlation amplitude, a statement about the magnitude of the revival rather than a direct quantification of environmental memory or information backflow.
		
Another important finding of this study is the examination of how the negativity convergence law depends on the environmental model. Analytical derivations and numerical validations demonstrate that the saturation behavior and its leading-order correction term depend exclusively on the dephasing factor, remaining independent of the spectral-density details for the two classes examined in this work. The independent analysis of both Lorentzian and Ohmic environments is consistent with this result, suggesting that the derived convergence law possesses a broader domain of validity than any single, specific environmental model; the corresponding discord behavior was examined only for the Lorentzian regimes studied.

The analytical results presented in this paper, together with the principal numerical findings, have been validated through a comprehensive battery of independent numerical tests. These include the direct reproduction of the relations from their fundamental definitions using diverse parameters, a cross-comparison between two independent computational implementations, an extensive scan of the parameter space, testing of asymptotic behavior at large dimensions, statistical fitting of the convergence law, verification of the analytical bounds utilizing multi-precision arithmetic (up to thousands of decimal digits), and a complementary independent verification based on the pseudomode method. Across this entire suite of tests, the numerical results agree with the derived analytical relations within numerical precision.

From a theoretical perspective, the relations presented in this paper provide an analytical framework for investigating the evolution of quantum correlations in bipartite systems of arbitrary dimension, thereby reducing the need for numerical optimization for this class of states. From a practical standpoint, these results may prove useful in the analysis of quantum memories, qudit-based systems, high-dimensional quantum communication, and other quantum information technologies that depend upon multilevel, maximally entangled states.

From an experimental standpoint, the independent boson model employed in this paper has already been directly implemented for the qubit case using engineered radio-frequency noise in nuclear magnetic resonance, where the von Neumann entropy of the system was measured directly~\cite{Khurana2018}—providing practical evidence that the exact, approximation-free framework of this work is more than a purely theoretical tool. Separately, molecular spin qudits, including systems based on $^{173}$Yb(trensal), have been demonstrated to be dominated by pure dephasing as their primary decoherence mechanism, with non-Markovian dynamics that are both computable and controllable with high precision~\cite{Petiziol2021,Ratini2025}. Combining these experimental precedents suggests a possible route toward an empirical test of the $1/d$ saturation law derived herein for dimensions greater than two—a step that would enable a direct experimental test of the $1/d$ saturation law derived in this work.


\section*{Data Availability Statement}

The Python codes and the numerical data generated in this study are openly available in a GitHub repository at \url{https://github.com/AhmadAkhound/qudit-pure-dephasing}. The repository includes the matrix-level checks of the closed-form negativity and discord results, the deterministic fits used in the finite-dimensional scaling analysis, the arbitrary-precision checks of the analytical error bounds, and the pseudomode correlation-function cross-check. An archived version is available at \url{https://doi.org/10.5281/zenodo.21410824}.

\appendix
\section{Detailed Derivation of the Difference $\mathcal{N}_\infty(r) - \mathcal{N}(r, d)$}
\label{app:algebra}

In this appendix, we show step by step how the explicit definitions of $\mathcal{N}(r, d)$ in Eq.~\eqref{eq:negativity-final} and $\mathcal{N}_\infty(r)$ in Eq.~\eqref{eq:N-infinity} lead to Eq.~\eqref{eq:diff-expansion}.

\textit{Step 1 -- Rewriting $\mathcal{N}(r, d)$.} Starting from the definition,
\begin{equation}
	\mathcal{N}(r, d) = \frac{1}{d} \sum_{k=1}^{d-1} (d-k) r^{k^2},
\end{equation}
we decompose the factor $(d-k)$ into its two contributions:
\begin{equation}
	\mathcal{N}(r, d) = \frac{1}{d} \sum_{k=1}^{d-1} d r^{k^2} \;-\; \frac{1}{d} \sum_{k=1}^{d-1} k r^{k^2}.
\end{equation}
In the first term, since $d$ is independent of the summation index $k$, it can be factored out of the sum and cancelled against the overall factor of $1/d$:
\begin{equation}
	\mathcal{N}(r, d) = \sum_{k=1}^{d-1} r^{k^2} \;-\; \frac{1}{d} \sum_{k=1}^{d-1} k r^{k^2}.
	\label{eq:app-N-split}
\end{equation}

\textit{Step 2 -- Writing $\mathcal{N}_\infty(r) - \mathcal{N}(r, d)$.} Substituting Eq.~\eqref{eq:app-N-split} and $\mathcal{N}_\infty(r) = \sum_{k=1}^\infty r^{k^2}$, we obtain
\begin{equation}
	\mathcal{N}_\infty(r) - \mathcal{N}(r, d) = \sum_{k=1}^{\infty} r^{k^2} \;-\; \sum_{k=1}^{d-1} r^{k^2} \;+\; \frac{1}{d} \sum_{k=1}^{d-1} k r^{k^2}.
\end{equation}

\textit{Step 3 -- Combining the first two sums.} The terms $\sum_{k=1}^{\infty} r^{k^2}$ and $-\sum_{k=1}^{d-1} r^{k^2}$ can be combined into a single sum, since the first series contains every term of the second (plus the tail beginning at $k=d$):
\begin{equation}
	\sum_{k=1}^{\infty} r^{k^2} - \sum_{k=1}^{d-1} r^{k^2} = \sum_{k=d}^{\infty} r^{k^2}
\end{equation}
(that is, the terms $k=1$ through $k=d-1$ cancel exactly, leaving only the terms $k=d, d+1, d+2, \dots$).

\textit{Step 4 -- Final result.} Substituting the result of Step 3,
\begin{equation}
	\boxed{\;\mathcal{N}_\infty(r) - \mathcal{N}(r, d) = \sum_{k=d}^{\infty} r^{k^2} \;+\; \frac{1}{d} \sum_{k=1}^{d-1} k r^{k^2}\;}
	\label{eq:app-final}
\end{equation}
which is exactly Eq.~\eqref{eq:diff-expansion} in the main text.

\textit{Note on the subsequent step (carried out in the main text, Sec.~\ref{sec:scaling-full}):} to arrive at the final form $C(r)/d + R(d)$ (the relation in Theorem~\ref{thm:convergence-bound}), the second term in Eq.~\eqref{eq:app-final} is split into two components:
\begin{equation}
	\frac{1}{d} \sum_{k=1}^{d-1} k r^{k^2} = \frac{1}{d} \left[ \sum_{k=1}^{\infty} k r^{k^2} - \sum_{k=d}^{\infty} k r^{k^2} \right] = \frac{C(r)}{d} - \frac{1}{d} \sum_{k=d}^{\infty} k r^{k^2},
\end{equation}
utilizing the definition $C(r) = \sum_{k=1}^\infty k r^{k^2}$ (Lemma~\ref{lem:C-finite}). Substituting this expression into Eq.~\eqref{eq:app-final} leads directly to the form of Theorem~\ref{thm:convergence-bound} in the main text.

\section{Independent Robustness Check via the Pseudomode Method}
\label{app:pseudomode}

In addition to the independent numerical confirmations presented in the preceding sections—all of which stem from the direct reconstruction of the paper's primary formulas using diverse parameters and computational methods—a further layer of validation was performed using an independent theoretical formulation based on the pseudomode model~\cite{Garraway1997,Mazzola2009,Pleasance2020}. This is an independent theoretical framework for describing non-Markovian dynamics, in which the structured reservoir is replaced by a single damped harmonic oscillator governed by a standard Lindblad equation. For the class of structured reservoirs for which the pseudomode mapping is valid, this formulation reproduces the reduced dynamics exactly; in the present comparison it is used solely as an independent cross-check.

It should be noted that this comparison serves as a purely independent, supplementary validation: every primary result of this work is computed directly from the exact spectral integral (Eq.~\eqref{eq:gamma-general}) without any intermediaries and relies on no pseudomode approximation or equivalence.

According to the standard pseudomode–structured-reservoir correspondence~\cite{Garraway1997,Mazzola2009,Pleasance2020}, the equivalent reservoir correlation function for a pseudomode with frequency $\omega_0$, coupling $g$, and damping rate $\kappa$ is
\begin{equation}
	C(\tau) = g^2 e^{-(\kappa/2)|\tau|} e^{-i\omega_0\tau}.
	\label{eq:pseudomode-corr}
\end{equation}
By setting $\kappa = 2\lambda$ and $g^2 = \pi\eta\lambda$, this becomes approximately equivalent to our Lorentzian spectral density $J(\omega) = \eta\lambda^2 / [(\omega-\omega_0)^2 + \lambda^2]$; this correspondence becomes asymptotically exact in the limit $\omega_0 \gg \lambda$, i.e., when the central frequency of the environment far exceeds its bandwidth.

\subsection{Diagnosing the Source of a Small Discrepancy}

To evaluate the precision of the parameter correspondence in Eq.~\eqref{eq:pseudomode-corr}, the ideal pseudomode correlation function $C(\tau)$ was compared directly—independent of any Lindblad-equation simulation—against the numerical Fourier transform of the target spectral density $J(\omega)$, restricted to the physical interval $\omega \in [0, \infty)$ (see Code S11). This comparison revealed that the ratio of these quantities remains within the range $0.987$ to $1.028$; thus, the correspondence established in Eq.~\eqref{eq:pseudomode-corr} agrees with the numerical Fourier transform to within approximately 1--3\%.

The observed discrepancy arises from the physical cutoff at $\omega=0$ inherent in the definition of $J(\omega)$: whereas the ideal pseudomode model’s equivalent spectrum, defined by Eq.~\eqref{eq:pseudomode-corr}, implicitly extends to $\omega \to -\infty$, our physical spectral density is defined only for $\omega \ge 0$. This behavior is consistent with a well-known limitation of the pseudomode method that emerges when the ratio $\omega_0/\lambda$ is not sufficiently large. Within the parameter regime of this study, where this ratio is of order 10, such deviations do not imply any error in the paper’s primary formulas, which are derived directly from the exact spectral integral without such approximations.

\subsection{Summary}

While this supplementary check identified and precisely explained the origin of a minor numerical discrepancy, it in no way affects the validity of the paper's primary formulas, which had already been independently verified, within numerical precision where applicable, through three complementary approaches: reconstruction from the fundamental definitions, re-verification across diverse parameter regimes, and comparison with established closed-form solutions in the literature. This supplementary analysis therefore serves only to document the known limitations of the auxiliary pseudomode approach within the parameter regime considered here.
		
	\bibliographystyle{apsrev4-2}
	\bibliography{references}

@book{BreuerPetruccione2002,
	author    = {Breuer, Heinz-Peter and Petruccione, Francesco},
	title     = {The Theory of Open Quantum Systems},
	publisher = {Oxford University Press},
	year      = {2002},
	address   = {Oxford},
	isbn      = {0198520638}
}

@article{Dajka2012,
	author  = {Dajka, J. and Mierzejewski, M. and {\L}uczka, J. and Blattmann, R. and H\"anggi, P.},
	title   = {Negativity and quantum discord in Davies environments},
	journal = {J. Phys. A: Math. Theor.},
	volume  = {45},
	pages   = {485306},
	year    = {2012},
	doi     = {10.1088/1751-8113/45/48/485306}
}

@article{Haikka2013,
	author  = {Haikka, Pinja and Johnson, Tomi H. and Maniscalco, Sabrina},
	title   = {Non-Markovianity of local dephasing channels and time-invariant discord},
	journal = {Phys. Rev. A},
	volume  = {87},
	pages   = {010103},
	year    = {2013},
	doi     = {10.1103/PhysRevA.87.010103}
}

@article{Abdellaoui2026,
	author  = {Abdellaoui, M. and Gaidi, S. and Slaoui, A. and Ahl Laamara, R.},
	title   = {Linear quantum discord and entanglement in qubit--qutrit systems under Non-Markovian colored noise dephasing},
	journal = {Physica A},
	volume  = {685},
	pages   = {131310},
	year    = {2026},
	doi     = {10.1016/j.physa.2026.131310}
}

@article{Rau2018,
	author  = {Rau, A. R. P.},
	title   = {Calculation of quantum discord in higher dimensions for X- and other specialized states},
	journal = {Quantum Inf. Process.},
	volume  = {17},
	pages   = {216},
	year    = {2018},
	doi     = {10.1007/s11128-018-1985-8}
}

@article{RauQubitQudit2012,
	author  = {Vinjanampathy, Sai and Rau, A. R. P.},
	title   = {Quantum discord for qubit--qudit systems},
	journal = {J. Phys. A: Math. Theor.},
	volume  = {45},
	pages   = {095303},
	year    = {2012},
	doi     = {10.1088/1751-8113/45/9/095303}
}

@article{VidalWerner2002,
	author  = {Vidal, G. and Werner, R. F.},
	title   = {Computable measure of entanglement},
	journal = {Phys. Rev. A},
	volume  = {65},
	pages   = {032314},
	year    = {2002},
	doi     = {10.1103/PhysRevA.65.032314}
}

@article{AliRostom2026,
	author  = {Ali, Asad and Rostom, Aiham M. and Al-Kuwari, Saif and Kuniyil, H. and Rahim, M. T. and Haddadi, Saeed},
	title   = {High-dimensional coherence to entanglement transduction under canonical noise},
	journal = {arXiv preprint},
	year    = {2026},
	eprint  = {2606.16695},
	archivePrefix = {arXiv},
	primaryClass  = {quant-ph}
}

@article{Garraway1997,
	author  = {Garraway, B. M.},
	title   = {Nonperturbative decay of an atomic system in a cavity},
	journal = {Phys. Rev. A},
	volume  = {55},
	pages   = {2290},
	year    = {1997},
	doi     = {10.1103/PhysRevA.55.2290}
}

@article{AliRauAlber2010,
	author  = {Ali, M. and Rau, A. R. P. and Alber, G.},
	title   = {Quantum discord for two-qubit $X$ states},
	journal = {Phys. Rev. A},
	volume  = {81},
	pages   = {042105},
	year    = {2010},
	doi     = {10.1103/PhysRevA.81.042105}
}

@article{AliRauAlber2010E,
	author  = {Ali, M. and Rau, A. R. P. and Alber, G.},
	title   = {Erratum: Quantum discord for two-qubit $X$ states},
	journal = {Phys. Rev. A},
	volume  = {82},
	pages   = {069902},
	year    = {2010},
	doi     = {10.1103/PhysRevA.82.069902}
}

@article{Huang2013,
	author  = {Huang, Yichen},
	title   = {Quantum discord for two-qubit $X$ states: Analytical formula with very small worst-case error},
	journal = {Phys. Rev. A},
	volume  = {88},
	pages   = {014302},
	year    = {2013},
	doi     = {10.1103/PhysRevA.88.014302}
}

@article{BreuerLainePiilo2009,
	author  = {Breuer, Heinz-Peter and Laine, Elsi-Mari and Piilo, Jyrki},
	title   = {Measure for the Degree of Non-Markovian Behavior of Quantum Processes in Open Systems},
	journal = {Phys. Rev. Lett.},
	volume  = {103},
	pages   = {210401},
	year    = {2009},
	doi     = {10.1103/PhysRevLett.103.210401}
}

@article{Leggett1987,
	author  = {Leggett, A. J. and Chakravarty, S. and Dorsey, A. T. and Fisher, Matthew P. A. and Garg, Anupam and Zwerger, W.},
	title   = {Dynamics of the dissipative two-state system},
	journal = {Rev. Mod. Phys.},
	volume  = {59},
	pages   = {1},
	year    = {1987},
	doi     = {10.1103/RevModPhys.59.1}
}

@article{Khurana2018,
	author  = {Khurana, Deepak and Agarwalla, Bijay Kumar and Mahesh, T. S.},
	title   = {Experimental emulation of quantum non-{Markovian} dynamics and coherence protection in the presence of information backflow},
	journal = {Phys. Rev. A},
	volume  = {99},
	pages   = {022107},
	year    = {2019},
	doi     = {10.1103/PhysRevA.99.022107}
}

@article{Petiziol2021,
	author  = {Petiziol, F. and Chiesa, A. and Wimberger, S. and Santini, P. and Carretta, S.},
	title   = {Counteracting dephasing in Molecular Nanomagnets by optimized qudit encodings},
	journal = {npj Quantum Inf.},
	volume  = {7},
	pages   = {133},
	year    = {2021},
	doi     = {10.1038/s41534-021-00466-3}
}

@article{Ratini2025,
	author  = {Ratini, Leonardo and Sansone, Giacomo and Garlatti, Elena and Petiziol, Francesco and Carretta, Stefano and Santini, Paolo},
	title   = {Understanding decoherence in molecular spin qudits},
	journal = {Phys. Rev. Research},
	volume  = {7},
	pages   = {043125},
	year    = {2025},
	doi     = {10.1103/16rd-tv1q}
}

@article{Mazzola2009,
	author  = {Mazzola, L. and Maniscalco, S. and Piilo, J. and Suominen, K.-A. and Garraway, B. M.},
	title   = {Sudden death and sudden birth of entanglement in common structured reservoirs},
	journal = {Phys. Rev. A},
	volume  = {79},
	pages   = {042302},
	year    = {2009},
	doi     = {10.1103/PhysRevA.79.042302}
}

@book{NielsenChuang2010,
	author    = {Nielsen, Michael A. and Chuang, Isaac L.},
	title     = {Quantum Computation and Quantum Information: 10th Anniversary Edition},
	publisher = {Cambridge University Press},
	year      = {2010},
	address   = {Cambridge},
	isbn      = {9781107002173}
}

@article{Rains1999,
	author  = {Rains, E. M.},
	title   = {Bound on distillable entanglement},
	journal = {Phys. Rev. A},
	volume  = {60},
	pages   = {179},
	year    = {1999},
	doi     = {10.1103/PhysRevA.60.179}
}

@article{Zhu2018,
	author  = {Zhu, Huangjun and Hayashi, Masahito and Chen, Lin},
	title   = {Axiomatic and operational connections between the $l_1$-norm of coherence and negativity},
	journal = {Phys. Rev. A},
	volume  = {97},
	pages   = {022342},
	year    = {2018},
	doi     = {10.1103/PhysRevA.97.022342}
}

@article{Pleasance2020,
	author  = {Pleasance, Graeme and Garraway, Barry M. and Petruccione, Francesco},
	title   = {Generalized theory of pseudomodes for exact descriptions of non-{M}arkovian quantum processes},
	journal = {Phys. Rev. Research},
	volume  = {2},
	pages   = {043058},
	year    = {2020},
	doi     = {10.1103/PhysRevResearch.2.043058}
}
	\end{document}